\let\csname equation*\endcsname\relax
\let\csname endequation*\endcsname\relax
\newtheorem{thm}{Theorem}
\newtheorem{prop}{Proposition}
\newtheorem{lemma}{Lemma}
\DeclareMathOperator{\Span}{Span}
\DeclareMathOperator{\diag}{diag}
\def\be{\begin{equation}}
\def\ee{\end{equation}}
\newcommand{\ket}[1]{| #1 \rangle}
\tikzset{part/.style={circle,fill=black,inner sep=1.8pt,outer sep=3pt}}
\tikzset{hole/.style={thick,circle,draw=black,fill=white,inner sep=1.8pt,outer sep=3pt}}
\tikzset{bosline/.style={draw=blue!15!white,ultra thick}}
\tikzset{bossline/.style={draw=blue!50!white,ultra thick}}
\tikzset{bosssline/.style={draw=blue!90!white,ultra thick}}
\tikzset{partline/.style={draw=black!35!green,ultra thick}}
\tikzset{thinpart/.style={draw=black!35!green,thick}}
\tikzset{holeline/.style={draw=black!20!red,ultra thick}}
\tikzset{thinhole/.style={draw=black!35!red,thick}}
\tikzset{regionline/.style={draw=black!30!blue,ultra thick}}
\tikzset{thinreg/.style={draw=black!30!blue,thick}}
\tikzset{bgline/.style={dotted}}
\tikzset{bgplaq/.style={fill=lightgray!15!white}}
\tikzset{gplaq/.style={fill=green!20!white}}
\tikzset{arr/.style={postaction={decorate,thick,decoration={markings,mark = at position #1 with {\arrow{>}}}}}}
\tikzset{invarr/.style={postaction={decorate,thick,decoration={markings,mark = at position #1 with {\arrow{<}}}}}}
\tikzset{6varr/.style={postaction={decorate,thick,decoration={markings,mark = at position #1 with {\arrow{triangle 60}}}}}}
\tikzset{inv6varr/.style={postaction={decorate,thick,decoration={markings,mark = at position #1 with {\arrow{triangle 60 reversed}}}}}}
\newcommand\blank[2]
\newcommand\gblank[2]
\newcommand\bosss[3]
\newcommand\boss[3]
\newcommand\bos[3]
\begin{document}

\title{Matrix product formula for Macdonald polynomials}
\author{${}^1$Luigi Cantini, ${}^2$Jan de Gier and ${}^3$Michael Wheeler}
\address{${}^1$Laboratoire de Physique Th\'eorique et Mod\'elisation (CNRS UMR 8089),
Universit\'e de Cergy-Pontoise, F-95302 Cergy-Pontoise, France, \\ ${}^{2,3}$Department of Mathematics and Statistics, The University of Melbourne, 3010 VIC, Australia}
\eads{${}^1$luigi.cantini@u-cergy.fr, ${}^2$jdgier@unimelb.edu.au, ${}^3$wheelerm@unimelb.edu.au}

\begin{abstract}
We derive a matrix product formula for symmetric Macdonald polynomials. Our results are obtained by constructing polynomial solutions of deformed Knizhnik--Zamolodchikov equations, which arise by considering representations of the Zamolodchikov--Faddeev and Yang--Baxter algebras in terms of $t$-deformed bosonic operators. These solutions are generalised probabilities for particle configurations of the multi-species asymmetric exclusion process, and form a basis of the ring of polynomials in $n$ variables whose elements are indexed by compositions. For weakly increasing compositions (anti-dominant weights), these basis elements coincide with non-symmetric Macdonald polynomials. Our formulas imply a natural combinatorial interpretation in terms of solvable lattice models. They also imply that normalisations of stationary states of multi-species exclusion processes are obtained as Macdonald polynomials at $q=1$.
\end{abstract}

\maketitle

\section{Introduction}

Symmetric Macdonald polynomials \cite{macd88,MacdBook} are families of multivariable orthogonal polynomials indexed by partitions whose coefficients depend rationally on two parameters $q$ and $t$. In the case $q=t$ they degenerate to the more familiar Schur functions which encode characters of irreducible representations of the symmetric group. One way to define Macdonald polynomials is as joint eigenfunctions of a family of commuting operators in the double affine Hecke algebra \cite{Cher95a,Cher95b}. They can also be defined combinatorially as generating functions \cite{HaglundHL1,HaglundHL2,RamY}, or via symmetrisation of non-symmetric Macdonald polynomials that are computed from Yang-Baxter graphs \cite{Lasc01,Lasc}.

Another family of commuting operators can be constructed in the Hecke algebra, corresponding to commuting transfer matrices in the theory of Yang--Baxter solvable lattice models \cite{baxterbook}. Those operators are solutions of the Yang--Baxter algebra generated by an $R$-matrix arising from a quantum group. Eigenfunctions of transfer matrices are generally complicated objects that can be constructed using the so-called Bethe ansatz. In cases where the eigenvalue is simple it is sometimes possible to construct explicit eigenfunctions of transfer matrices, and a well developed technique to construct such eigenfunctions is the matrix product algebra \cite{dehp} for the asymmetric exclusion process. In an inhomogeneous setting the matrix product algebra is known as the Zamolodchikov--Faddeev (ZF) algebra \cite{ZZ1979,Fad1980}. The latter algebra is related to the $t$-deformed Knizhnik--Zamolodchikov equation \cite{KZ,FR}. 

Opdam \cite{Opdam} and Cherednik \cite{Cher95a,Cher95b} generalised Macdonalds construction to a non-symmetric setting and defined families of non-symmetric Macdonald polynomials which are indexed by tuples of integers called compositions. There exists a basis in the ring spanned by non-symmetric Macdonald polynomials whose defining equations are exchange relations which are equivalent to the $t$-deformed Zamolodchikov--Faddeev algebra. Elements of this basis specialise to probabilities of particle configurations of the multi-species asymmetric exclusion process. Solutions to the ZF algebra can be obtained from solutions of the Yang--Baxter algebra, but not all solutions of the latter give non-trivial solutions of the former. Polynomial solutions of the deformed ZF and KZ algebras have received recent attention in the context of the Razumov--Stroganov--Cantini--Sportiello theorem and alternating-sign matrices \cite{DFZJ,Pasq,KasPasq,KasataniT,SigechiU,RSZJ,dGLS}.

In this paper we put several ingredients together to show that a solution of the Yang--Baxter algebra arising for the multi-species asymmetric exclusion process and based on $t$-deformed bosons generates non-trivial polynomial solutions of the Zamolodchikov--Faddeev algebra. We show that those solutions lead to matrix product formulas for certain basis functions in the ring of polynomials spanned by the non-symmetric Macdonald polynomials, and by symmetrisation this implies a matrix product formula for symmetric Macdonald polynomials. We furthermore provide a combinatorial interpretation of our formulas in terms of two-dimensional lattice configurations of boson trajectories. 

For $q=1$ the homogeneous limit of our formulas are steady state probabilities of multispecies asymmetric exclusion processes. In the context of the simple asymmetric exclusion process the ZF algebra was studied in \cite{SasaW97} and recently reviewed in a more general context of integrable exclusion processes in \cite{CrampeRV}. Our construction here is heavily based on recent results for the multispecies asymmetric exclusion process \cite{FerrariM1,FerrariM2,EvansFM,ProlhacEM,AritaAMP,AritaM}.

\subsection{Deformed bosons}
In the following we shall make extensive use of bosonic operators called $t$-deformed oscillators. These operators appear naturally in the context of solvable lattice models and in matrix product formulas for the asymmetric simple exclusion process (ASEP) \cite{SasaW97,BlytheECE}.  They are defined by
\be
\begin{split}
a k&=t k a,\qquad  a^\dag k = t^{-1}k a^\dag,\\
a a^\dag &= 1-k, \qquad a^\dag a = 1-t^{-1} k.
\end{split}
\label{eq:oscillators}
\ee
A consequence of these relations is the following equation which we shall use frequently,
\be
aa^\dag - t a^\dag a = 1-t.
\ee
A faithful representation of this algebra on the Fock space $\Span\{ \ket{m} \}_{m=0}^\infty$ is given by 
\be
\begin{split}
a\ket{m} &=  (1-t^{m})^{\frac12}\ket{m-1}, \qquad
a^\dagger \ket{m} =  (1-t^{m+1})^{\frac12}\ket{m+1}, \\
k&= \diag\{1,t,t^{2},\ldots\}.
\end{split}
\label{eq:oscillatorsrep}
\ee

\subsection{Hecke algebra action on polynomials}
The second algebraic ingredient we need is an action of the Hecke algebra on polynomials. We start by denoting by $S_{n}$ the symmetric group on $n$ elements, i.e. $S_{n}$ is the Weyl group of type $A_{n-1}$. The symmetric group has a natural action on  compositions $\lambda\in \mathbb{N}_0^n$ given by
\be
s_i(\ldots,\lambda_i,\lambda_{i+1},\ldots) = (\ldots,\lambda_{i+1},\lambda_i,\ldots).
\label{eq:symmgroup}
\ee
The symmetric group also has a natural action on polynomials, given by
\be
\label{eq:symmgroup_pol}
s_i f(\ldots,x_i,x_{i+1},\ldots) = f(\ldots,x_{i+1},x_{i},\ldots). 
\ee

A $t$-deformation of \eref{eq:symmgroup_pol} may be defined using divided differences leading to the Demazure operator
\be
T_i^{\pm 1} = t^{\pm 1/2} - t^{-1/2} \frac{t x_i - x_{i+1}}{x_i-x_{i+1}} (1-s_i),
\label{eq:Hecke}
\ee
which for $t\rightarrow 1$ reduces to \eref{eq:symmgroup_pol}. The operator $T_i$ satisfies the relations of the Hecke algebra of type $A_{n-1}$,
\begin{align}
(T_i-t^{1/2})(T_i+t^{-1/2})&=0,\qquad (i=1,\ldots,n-1)\nonumber\\
T_i T_{i\pm 1} T_i &= T_{i\pm 1} T_i T_{i\pm 1}, \\
T_iT_j &= T_jT_i\qquad |i-j|\ge 2. \nonumber
\end{align}
It is convenient to define the following shifted operator (sometimes referred to as Baxterised operator),
\begin{equation}
T_i(u)  = T_i +\frac{t^{-1/2}}{[u]},\qquad  [u]= \frac{1- t^{u}}{1-t}.
\label{eq:Demazure} 
\end{equation}
which satisfies the Yang--Baxter equation,
\begin{equation}
T_i(u)T_{i+1}(u+v)T_i(v) = T_{i+1}(v) T_i(u+v) T_{i+1}(u).
\end{equation}

We furthermore define the following shift operator compatible with the affine Hecke algebra of type $A_{n-1}$,
\begin{align}
(\omega f)(x_1,\ldots,x_n) &= f(qx_n,x_1,\ldots,x_{n-1}),\\
\omega T_i &= T_{i+1}\omega.
\end{align}
The affine Hecke algebra formed by the operators \eref{eq:Hecke} and $\omega$ has an Abelian subalgebra generated by the Murphy elements, which are defined as
\be
Y_i = T_i\cdots T_{n-1} \omega T_{1}^{-1} \cdots T_{i-1}^{-1}.
\label{eq:Yi}
\ee
These operators mutually commute and central elements in the Hecke algebra can be constructed by taking symmetric combinations of the Murphy elements. Joint eigenfunction of the operators $Y_i$ are non-symmetric Macdonald polynomials. 

For later use it is convenient to develop some notation. Representations of $S_n$ are indexed by partitions, which are those compositions for which $\lambda_1 \ge \lambda_2 \ge \ldots \ge \lambda_n$. Compositions are naturally ordered under the dominance order $\ge$ on $\mathbb{N}_0^n$, which is defined as
\be
\lambda\ge \mu\quad\text{if}\quad \sum_{i=1}^k (\lambda_i - \mu_i) \ge 0\quad \text{for }k=1,\ldots,n.
\ee

Let $\lambda$ be a composition and let $\lambda^+$ be the dominant weight of $\lambda$ in the dominance order on compositions, i.e. $\lambda^+$ is a permutation of $\lambda$ such that $\lambda_1 \geq \lambda_2 \geq \ldots \geq \lambda_n$. Let $w_+$ be the smallest word such that $\lambda = w_+\cdot\lambda^+$. The permutation $w_+^{-1}$ is obtained  by labeling each entry from $\lambda$ with a number from 1 to $n$, from the biggest entry to the smallest and from the left to the right. For instance $\lambda=(3,0,4,4,2) \Rightarrow w_+^{-1}=(3,5,1,2,4)$ and so $w_+=(3,4,1,5,2)$ and $\lambda^+=(4,4,3,2,0)$. The anti-dominant weight for this example is $\delta=(0,2,3,4,4)$. 

To be able to define the spectral vector in the next section we will need the quantity 
\be 
\rho(\lambda) := w_+\cdot\rho,
\ee 
where $\rho=\tfrac12(n-1,n-3,\ldots,-(n-1))$. For the example above, $n=5$ and therefore $\rho=(2,1,0,-1,-2)$ and $\rho(\lambda)=(0,-2,2,1,-1)$.

\subsection{Deformed Knizhnik--Zamolodchikov equations}

We are interested in finding explicit matrix product formulas for Macdonald polynomials. We first derive matrix product formulas for polynomials which are linear combinations of non-symmetric Macdonald polynomials. These polynonmials are defined by a system of exchange equations closely related to the $q$-deformed Knizhnik--Zamolodchkov ($q$KZ) equation. We note here that in our notation the parameter that is usual called $q$ is replaced by $t$ in order to make connection to the literature on Macdonald theory. 

Following Kasatani and Takeyama \cite{KasataniT}, polynomial solutions to the (reduced) $q$KZ equations, sometimes called exchange relations, can be obtained from eigenfunctions of the $Y_i$ operators. Let $\delta$ be the anti-dominant weight and let $E_\delta$ be the non-symmetric Macdonald polynomial solving the eigenvalue equation
\begin{equation}
Y_i E_\delta = y_i(\delta) E_\delta,
\end{equation}
where $y_i(\delta)=t^{\rho(\delta)_i} q^{\delta_i}$. We now define another set of polynomials, which are linear combinations of the Macdonald polynomials $E_\lambda$, and for which we will be able to find compact explicit expression. Define
\begin{align}
f_\delta &:= E_\delta,\nonumber\\
f_{\ldots,\lambda_i,\lambda_{i+1},\ldots} &:= t^{-1/2} T_i^{-1} f_{\ldots,\lambda_{i+1},\lambda_{i},\ldots} \quad \lambda_i > \lambda_{i+1}.
\end{align}
 Note that our notation differs slightly from \cite{KasataniT} as \eqref{qKZ2} contains a factor $t^{-1/2}$ on the right hand side, and consequently some other details below are different.

Then $f$ solves the $q$KZ equations
\begin{align}
T_i f_{\ldots,\lambda_i,\lambda_{i+1},\ldots} &= t^{1/2} f_{\ldots,\lambda_i,\lambda_{i+1},\ldots}\quad \lambda_i = \lambda_{i+1},\label{qKZ1}\\
T_i f_{\ldots,\lambda_i,\lambda_{i+1},\ldots} &= t^{-1/2} f_{\ldots,\lambda_{i+1},\lambda_i,\ldots}\quad \lambda_i > \lambda_{i+1},\label{qKZ2}\\
\omega f_{\lambda_n,\lambda_1,\ldots,\lambda_{n-1}} &= q^{\lambda_n} f_{\lambda_1,\ldots,\lambda_n}.\label{qKZ3}
\end{align}
%
%
Writing $q=t^{u}$ we define the elements of the \textit{spectral vector}  $\langle \lambda\rangle$ of a composition $\lambda$  as,
\be
\langle \lambda\rangle_i = \rho_i(\lambda)+ u \lambda_i.\qquad y_i(\lambda) = t^{\langle \lambda\rangle_i }.
\ee 
The non-symmetric Macdonald polynomials are obtained from $E_\delta$ by the action of Baxterised operators:
\begin{equation}
E_{s_i\lambda}=  T_i(\langle \lambda\rangle_{i+1} -\langle \lambda\rangle_{i} ) E_\lambda,\qquad \lambda_i < \lambda_{i+1}.
\label{eq:HonE}
\end{equation}
These polynomials satisfy
\begin{equation}
Y_i E_\lambda = y_i(\lambda) E_\lambda.
\end{equation}

Below we will derive a matrix product expression for the functions $f_\lambda$, and while we will not give an explicit matrix product expression for the polynomials $E_\lambda$, these can be derived as linear combinations of the $f_\lambda$ using \eref{eq:HonE}. In fact, the two families of polynomials are related via a triangular change of basis:
\begin{align}
\label{eq:triang}
E_{\lambda}
=
\sum_{\mu \leq \lambda}
c_{\lambda\mu}(q,t)
f_{\mu},
\qquad
f_{\lambda}
=
\sum_{\mu \leq \lambda}
d_{\lambda\mu}(q,t)
E_{\mu}
\end{align}
for suitable rational coefficients $c_{\lambda\mu}(q,t)$ and $d_{\lambda\mu}(q,t)$.

\section{Matrix Product, Yang--Baxter and Zamolodchikov--Faddeev algebras}

\subsection{Matrix product Ansatz}

The aim of this section is to obtain a matrix product expression for $E_{\delta}(x_1,\dots,x_n)$, the non-symmetric Macdonald polynomial indexed by the anti-dominant weight $\delta$. Our approach is to write an Ansatz for the polynomials $f_{\lambda}$, which generalize $E_{\delta}$, and to show that this Ansatz obeys the $q$KZ equations \eqref{qKZ1}--\eqref{qKZ3}. The Ansatz is as follows:
\begin{equation}
\Omega_{\lambda^+} f_{\lambda}(x_1,\ldots,x_n) = \Tr \Big[ A_{\lambda_1}(x_1) \cdots A_{\lambda_n}(x_n) S\Big],
\label{MPA}
\end{equation}
where $\Omega_{\lambda^+}$ is a normalisation factor to be determined later and $A_0(x),A_1(x),\dots,A_r(x)$ and $S$ are matrices satisfying the exchange relations
\begin{align}
A_i(x)A_i(y) 
&= 
A_i(y)A_i(x),
\label{Exchange0}
\\
t A_j(x)A_i(y)
-
\frac{tx-y}{x-y}
\Big( A_j(x)A_i(y) - A_j(y)A_i(x) \Big)
&= A_i(x) A_j(y),
\label{Exchange1}
\\
S A_i(q x) 
&= 
q^{i} A_i(x)S, 
\label{Exchange2}
\end{align}
for all $0 \leq i<j \leq r$. It is straightforward to demonstrate that the Ansatz \eqref{MPA} is a faithful solution to the $q$KZ relations \eqref{qKZ1}--\eqref{qKZ3}. The validity of \eqref{qKZ1} is ensured by \eqref{Exchange0}, since $f_{\lambda}(x_1,\dots,x_n)$ is symmetric in $x_i,x_{i+1}$ when $\lambda_i = \lambda_{i+1}$. Accordingly, the action of $(1-s_i)$ gives zero in this case, and we find that 
\begin{align}
T_i
f_{\lambda_1,\dots,\lambda_i,\lambda_{i+1},\dots,\lambda_n} 
(x_1,\dots,x_n)
=
t^{1/2} f_{\lambda_1,\dots,\lambda_{i+1},\lambda_{i},\dots,\lambda_n} 
(x_1,\dots,x_n)
\end{align}
by inspection. In a similar vein, when $\lambda_i > \lambda_{i+1}$, direct application of \eqref{Exchange1} allows us to conclude that
\begin{align}
T_i
f_{\lambda_1,\dots,\lambda_i,\lambda_{i+1},\dots,\lambda_n} 
(x_1,\dots,x_n)
=
t^{-1/2} f_{\lambda_1,\dots,\lambda_{i+1},\lambda_{i},\dots,\lambda_n} 
(x_1,\dots,x_n).
\end{align}
Finally, to prove \eqref{qKZ3}, we observe that
\begin{align}
\Omega_{\lambda^+}f_{\lambda_n,\lambda_1,\dots,\lambda_{n-1}}
(q x_n,x_1,\dots,x_{n-1})
&=
\Tr\left(
A_{\lambda_n}(q x_n)
A_{\lambda_1}(x_1)
\cdots
A_{\lambda_{n-1}}(x_{n-1})
S
\right),
\nonumber\\
&=
\Tr\left(
A_{\lambda_1}(x_1)
\cdots
A_{\lambda_{n-1}}(x_{n-1})
S
A_{\lambda_n}(q x_n)
\right)\nonumber\\
&=
q^{\lambda_n}
\Omega_{\lambda^+} f_{\lambda_1,\dots,\lambda_n} 
(x_1,\dots,x_n),
\end{align}
where we have used the cyclicity of the trace and the exchange relation \eqref{Exchange2} to reach the final equality.
 
\subsection{Zamolodchikov--Faddeev algebra}

Solutions to the relations \eqref{Exchange0}--\eqref{Exchange1} can be recovered from the Yang--Baxter algebra corresponding to the quantum group $U_{t^{1/2}}(A^{(1)}_r)$, or rather a twisted version of it \cite{Resh90}. For models based on $U_{t^{1/2}}(A^{(1)}_r)$, the $R$-matrix can be expressed in the form
\begin{align}
\check{R}^{(r)} (x,y)
=
\sum_{i=1}^{r+1}
E^{(ii)}
\otimes
E^{(ii)}
+
\frac{x - y}{t x - y}
\sum_{1 \leq i < j \leq r+1}
\Big(
t E^{(ij)}
\otimes
E^{(ji)}
+
E^{(ji)}
\otimes
E^{(ij)}
\Big)\nonumber
\\
\frac{t-1}{t x - y}
\sum_{1 \leq i < j \leq r+1}
\Big(
x
E^{(ii)}
\otimes
E^{(jj)}
+
y
E^{(jj)}
\otimes
E^{(ii)}
\Big)
\end{align}
where $E^{(ij)}$ denotes the elementary $(r+1) \times (r+1)$ matrix with a single non-zero entry 1 at position $(i,j)$. The intertwining equation, or Yang--Baxter algebra, for such models is given by
\begin{align}
\check{R}(x,y)\cdot\left [L(x)\otimes L(y)\right] 
= 
\left [L(y)\otimes L(x)\right] \cdot \check{R}(x,y),
\label{eq:YBAdef}
\end{align}
where we have suppressed the superscript $(r)$ and $L(x)=L^{(r)}(x)$ is an $(r+1) \times (r+1)$ operator-valued matrix. This algebra is well-studied and many solutions for $L(x)$ are known. In the next section we will provide a solution for which the elements of $L(x)$ are given in terms of $t$-deformed quantum oscillators.

The exchange relations \eqref{Exchange0}--\eqref{Exchange1}  are equivalent to the Zamolodchikov--Faddeev (ZF) algebra \cite{ZZ1979,Fad1980},
\be
\check{R}(x,y )\cdot\left [\mathbb{A}(x)\otimes \mathbb{A}(y)\right] 
= 
\left [\mathbb{A}(y)\otimes \mathbb{A}(x)\right] ,
\label{eq:ZFdef}
\ee
where again we suppress $(r)$ and $\mathbb{A}=\mathbb{A}^{(r)}(x)$ is an $(r+1)$-dimensional operator valued column vector given by
\be
\mathbb{A}^{(r)}(x) = (A_0(x),\ldots, A_r(x))^T.
\ee
Equation \eref{Exchange2} is instead rewritten as
\begin{equation}\label{eq:SZF}
S \mathbb{A}(qx) =q^{\sum_i iE^{(ii)}}\mathbb{A}(x) S,
\end{equation}
where the rank $r$ is again implicit, i.e. $\mathbb{A}=\mathbb{A}^{(r)}(x)$ and $S=S^{(r)}$.

We can construct solutions of \eqref{eq:ZFdef} by rank-reducing the Yang--Baxter algebra \eref{eq:YBAdef} in the following way. Assume a solution of the following modified $RLL$ relation
\begin{align}
\check{R}^{(r)}(x,y)\cdot\left [\tilde{L}(x)\otimes \tilde{L}(y)\right] 
= 
\left [\tilde{L}(y)\otimes \tilde{L}(x)\right] \cdot \check{R}^{(r-1)}(x,y),
\label{eq:YBAdef_lowrank}
\end{align}
in terms of an $(r+1) \times r$ operator-valued matrix $\tilde{L}(x)=\tilde{L}^{(r)}(x)$, and an operator $s=s^{(r)}$ that satisfies
\begin{equation}
s\tilde{L}(qx) =q^{\sum iE^{(ii)}}\tilde{L}(x) sq^{-\sum iE^{(ii)}},
\end{equation}
which in components just means
\begin{equation}
\label{eq:twist-comp}
s\tilde{L}_{ij}(qx) =q^{i-j}\tilde{L}_{ij}(x) s.
\end{equation}
Then
\begin{align}
\label{eq:nestedMPA}
\mathbb{A}^{(r)}(x) &= \tilde{L}^{(r)}(x)\cdot \tilde{L}^{(r-1)}(x)  \cdots \tilde{L}^{(1)}(x),
\\
S^{(r)} &= s^{(r)}\cdot s^{(r-1)} \cdots s^{(1)}
\label{eq:nestedTwist}
\end{align}
gives a solution to \eqref{eq:ZFdef} and  \eref{eq:SZF} provided that the operator entries of $\tilde{L}^{(a)}(x)$ commute with those of $\tilde{L}^{(b)}(y)$, for all $a \not=b$. The usual way to ensure this commutativity is to demand that the entries of $\tilde{L}^{(a)}$ act on same vector space $V_a$ while $\tilde{L}^{(b)}$ act on a different vector spaces $V_b$, and indeed we shall adopt this approach in the coming sections. We will show that solutions to \eref{eq:YBAdef_lowrank} can be obtained from the Yang--Baxter algebra \eref{eq:YBAdef} by trivialising the representation of one of the quantum oscillators, with the consequence of reducing the rank of $L^{(r)}(x)$ by one and thus giving rise to $\tilde{L}^{(r)}(x)$.

\section{Low rank examples}

\subsection{Rank 1 solution to ZF algebra} 

Before presenting the general construction we will first display some explicit examples. For convenience we define the following functions
\begin{align}
b^+  &= \displaystyle \frac{t(x-y)}{tx-y},  &  b^-  &= t^{-1} b^+ = \frac{x-y}{tx-y}, \nonumber \\
\\[-\baselineskip]
c^+ & =1-b^+ = \frac{y (t-1)}{tx-y}, & c^- &=1-b^- = \frac{x (t-1)}{tx-y}. \nonumber
\end{align}
Then for $r=1$, we can trivially solve \eref{eq:YBAdef_lowrank} for $\tilde{L}^{(1)}(x)$: 
\begin{align*}
\left(
\begin{array}{cc|cc}
1 & 0 & 0 & 0
\\
0 & c^- &  b^+ & 0
\\
\hline
0 & b^- &c^+ & 0
\\
0 & 0 & 0 & 1
\end{array}
\right)
\cdot
\left[
\left(
\begin{array}{c}
1 \\ x
\end{array}
\right)
\otimes
\left(
\begin{array}{c}
1 \\ y
\end{array}
\right)\right]
=
\left[
\left(
\begin{array}{c}
1 \\ y
\end{array}
\right)
\otimes
\left(
\begin{array}{c}
1 \\ x
\end{array}
\right)
\right].
\end{align*}
Hence we see that 
\be
\tilde{L}^{(1)}(x)=\mathbb{A}^{(1)}(x)=\begin{pmatrix} 1\\ x \end{pmatrix},
\ee
is a rank $1$ solution to \eref{eq:ZFdef}. The corresponding solution to the Yang--Baxter algebra  \eqref{eq:YBAdef} is equal to
\be
L^{(1)}(x)=\begin{pmatrix} 1 & a \\ x a^\dag & x \end{pmatrix},
\ee
where the operators $a$, $a^\dagger$ and $k$ satisfy the $t$-oscillator relations \eref{eq:oscillators}. We note that trivialising the $t$-oscillator by sending $a^\dagger,a\mapsto 1$ and $k\mapsto 0$, we reduce the rank, and thus obtain the  solution $\tilde{L}^{(1)}(x)$:
\be
\begin{pmatrix} 1 & a \\  x a^\dag & x \end{pmatrix} \mapsto \begin{pmatrix} 1 & 1 \\ x & x \end{pmatrix}.
\ee

\subsection{Rank 2 solution to ZF algebra}

The rank $2$ case is less trivial, giving rise to operator valued solutions for $\mathbb{A}^{(2)}(x)$. In the case $r=2$, solving equation \eref{eq:YBAdef_lowrank} for $\tilde{L}^{(2)}(x)$, we find that
\begin{align}
\left(
\begin{array}{ccc|ccc|ccc}
1 & 0 & 0 & 0 & 0 & 0 & 0 & 0 & 0
\\
0 & c^- & 0 & b^+ & 0 & 0 & 0 & 0 & 0
\\
0 & 0 &c^-& 0 & 0 & 0 & b^+ & 0 & 0
\\
\hline
0 & b^- & 0 & c^+& 0 & 0 & 0 & 0 & 0
\\
0 & 0 & 0 & 0 & 1 & 0 & 0 & 0 & 0
\\
0 & 0 & 0 & 0 & 0 & c^- & 0 & b^+& 0
\\
\hline
0 & 0 & b^- & 0 & 0 & 0 &  c^+ & 0 & 0
\\
0 & 0 & 0 & 0 & 0 & b^- & 0 & c^+& 0
\\
0 & 0 & 0 & 0 & 0 & 0 & 0 & 0 & 1
\end{array}
\right)\cdot
\left[\left(
\begin{array}{cc}
1 & a \\ x k & 0 \\ x a^\dag & x
\end{array}
\right)\otimes
\left(
\begin{array}{cc}
1 & a \\ y k& 0 \\ y a^\dag & y
\end{array}
\right)\right]
= \nonumber
\\
\left[\left(
\begin{array}{cc}
1 & a \\ y k  & 0\\ y a^\dag & y
\end{array}
\right)\otimes
\left(
\begin{array}{cc}
1 & a \\ x k & 0 \\ x a^\dag & x
\end{array}
\right)\right]\cdot
\left(
\begin{array}{cc|cc}
1 & 0 & 0 & 0
\\
0 & c^- & b^+& 0
\\
\hline
0 & b^- & c^+ & 0
\\
0 & 0 & 0 & 1
\end{array}
\right),
\end{align}
Using \eref{eq:nestedMPA} we thus construct a solution of the ZF algebra in the following way:
\be
\mathbb{A}^{(2)}(x) = \tilde{L}^{(2)}(x)\cdot \tilde{L}^{(1)}(x) = \begin{pmatrix} 1 & a \\ x k & 0 \\ x a^\dag & x \end{pmatrix}  \begin{pmatrix} 1  \\ x \end{pmatrix} =  \begin{pmatrix} 1  + x a \\ k x \\ x a^\dag+x^2 \end{pmatrix} .
\label{eq:ZFr=3}
\ee
The associated rank 2 solution to the Yang--Baxter algebra is
\be
L^{(2)}(x)=
\begin{pmatrix}
1 & a_1 & a_2 \\
x a_1^\dag k_2 & x k_2 & 0 \label{eq:osc}\\
x a_2^\dag & x a_1 a_2^\dag & x
\end{pmatrix},
\ee
where $\{a_1,a^{\dag}_1,k_1\}$ and  $\{a_2,a^{\dag}_2,k_2\}$ are two commuting copies of the $t$-oscillator algebra \eref{eq:osc}. The map $a_1^\dagger,a_1\mapsto 1$ and $k_1\mapsto 0$ reduces the rank of $L^{(2)}(x)$ by one
\be
L^{(2)}(x) \mapsto
\begin{pmatrix}
1 & 1 & a_2 \\
x k_2  & x k_2 & 0\\
x a_2^\dag & x a_2^\dag & x
\end{pmatrix}\  \Rightarrow\ 
\tilde{L}^{(2)}(x)=
\begin{pmatrix}
1 &  a_2 \\
x k_2 & 0\\
x a_2^\dag & x
\end{pmatrix},
\label{eq:TildeL2}
\ee
where the indices of $t$-oscillators are redundant in the final matrix, since we no longer need to distinguish between the two copies of the algebra.

We note that low-rank examples of $L$-matrices based on deformed oscillators have been treated in earlier literature, for example, \cite{SasaW97,BogoB92,BogoIK,BoosGKNR,KunibaOS}.

\subsection{A polynomial example}
\label{se:polexample}
We look at an explicit example for rank 2 taking $\delta=(0,0,1,1,2,2)$. In the notation of \cite{KasataniT} this leads to $\rho(\delta)=\frac12(-3,-5,1,-1,5,3)$. The nonsymmetric Macdonald polynomial that generates solutions to the $q$KZ equation satisfies the following equations:
\be
\begin{array}{ll}
Y_1 E_\delta = t^{-3/2} E_\delta \qquad & Y_4 E_\delta = q t^{-1} E_\delta \\
Y_2 E_\delta = t^{-5/2} E_\delta & Y_5 E_\delta = q^2 t^{5/2} E_\delta\\
Y_3 E_\delta = q t E_\delta & Y_6 E_\delta = q^2 t^{3/2} E_\delta.
\end{array}
\label{eq:Edefr=2}
\ee
Using the notation 
\begin{equation}
q=t^{u},\qquad [m]=\frac{1-t^{m}}{1-t},
\end{equation}
it can be verified that the following polynomial solves \eref{eq:Edefr=2},
\begin{multline}
E_\delta(x_1,\ldots,x_6;q,t)= x_3 x_4 x_5^2 x_6^2 + \frac{t^2}{[3+u]} (x_1+x_2)x_3x_4 x_5x_6(x_5+x_6) \\
+ \frac{t^{4}[2]}{[3+u][4+u]}x_1x_2x_3x_4x_5x_6.
\label{eq:Edeltar=2}
\end{multline}

We now verify the matrix product form 
\be
\Omega_{\delta^+} E_\delta(x_1,\ldots,x_6;q=t^{u},t)=\Tr \big[ A_0(x_1)A_0(x_2)A_1(x_3)A_1(x_4)A_2(x_5)A_2(x_6)S\big],
\ee
for this explicit solution. From \eref{eq:ZFr=3} we see that
\begin{align}
A_0(x) &= 1+xa,\nonumber \\
A_1(x) &= x k,\\
A_2(x) &= x a^\dag + x^2,\nonumber
\end{align}
and using \eref{Exchange2} we note that $S$ should satisfy
\be
qSa - a S =0, \qquad Sa^\dag -q a^\dag S =0.
\ee
Taking the explicit representation \eref{eq:oscillatorsrep} for the $t$-oscillators, $S$ has the form
\be
S = k^{u} = \diag\{1,t^{u},t^{2u},\ldots\} = \diag\{1,q,q^{2},\ldots\}.
\ee
Up to a normalisation, the nonsymmetric Macdonald polynomial $E_\delta$ is now represented in matrix product form by
\begin{align}
&\Tr \left[ \left(1+ x_1 a\right) \left(1+ x_2 a\right) x_3 k x_4 k  x_5 \left(a^\dag+ x_5 \right) x_6 \left(a^\dag+ x_6 \right)S\right] \nonumber\\
&=x_3x_4x_5x_6 \Tr  \left[\left( x_5 x_6 k^2 + (x_1+x_2)(x_5+x_6)a k^2 a^\dag + x_1x_2 a^2 k^2 (a^\dag)^2 \right)S\right],
\end{align}
where other terms involving unequal powers of $a$ and $a^\dag$ have zero trace.
Normalising with $\Omega_{\delta^+}=\Omega_{221100}=\Tr(k^2S)$ we finally get
\begin{multline}
E_\delta(x_1,\ldots,x_6;q=t^{u},t) = x_3x_4x_5^2x_6^2 + x_3x_4x_5x_6(x_1+x_2)(x_5+x_6) t^2 
\frac{\Tr a a^\dag k^2S}{\Tr k^2S} \\ + x_1x_2x_3x_4 x_5x_6 t^4 \frac{\Tr a^2 (a^\dagger)^2 k^2S}{\Tr k^2S},
\label{eq:Edtrace}
\end{multline}
which can be shown to equal \eref{eq:Edeltar=2}. We give the details of calculating the traces in Section~\ref{se:traces}.

\section{Matrix product for general rank}

The ZF algebra for the case $x=y=1$ is known as the {\it matrix product algebra} for the multi-species asymmetric exclusion process. A general rank solution for this case, first for $t=\infty$ and then in terms of $t$-oscillators, was recently obtained by several authors in a sequence of works \cite{FerrariM1,FerrariM2,EvansFM,ProlhacEM, AritaAMP}. A generalisation of these results that includes a spectral parameter which can be chosen inhomogeneously was found earlier in \cite{InoueKO} and independently in the case of super-algebras in \cite{Tsuboi}. In Section~\ref{sec:graph} we shall illuminate the very natural combinatorial structure of this solution. We note that a different inhomogeneous generalisation of the multi-secpies ASEP, with species-dependent hopping parameters, was studied in \cite{LamW, AritaM,AyyerL12,AyyerL14}.  
\begin{thm}\label{th:L}
Consider a matrix $L^{(r)}(x)$ whose entries are given by
\begin{align}
L^{(r)}_{ij}(x)
=
\left\{
\begin{array}{ll}
x \prod_{m=i+1}^{r}
k_m,
&
i=j
\\
\\
x a_j a^\dag_i
\prod_{m=i+1}^{r} k_m,
&
i>j
\\
\\
0,
&
i<j
\end{array}
\right.
\label{eq:Losc1}
\end{align}
for all $1 \leq i,j \leq r$, and
\begin{align}
L^{(r)}_{0j}
=
a_j,\
1 \leq j \leq r,
\quad\quad
L^{(r)}_{i0}(x)
=
x a^\dag_i \prod_{m=i+1}^{r}
k_m,\
1 \leq i \leq r,
\quad\quad
L^{(r)}_{00}
=
1,
\label{eq:Losc2}
\end{align}
where $\{a_i,a^{\dag}_i,k_i\}$, $1 \leq i \leq r$ are $r$ commuting copies of the $t$-oscillator algebra \eref{eq:osc}. Then this $L$ matrix satisfies the intertwining equation \eqref{eq:YBAdef}.
\end{thm}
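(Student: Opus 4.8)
The plan is to verify the Yang--Baxter algebra relation \eqref{eq:YBAdef} directly, componentwise, by expanding both sides as operator-valued $(r+1)^2 \times (r+1)^2$ matrices acting on $V \otimes V$ (with $V = \mathbb{C}^{r+1}$) and checking equality of each matrix entry as a relation in the $r$-fold tensor product of $t$-oscillator algebras. Since $\check R^{(r)}(x,y)$ is given explicitly, and $L^{(r)}(x) \otimes L^{(r)}(y)$ has entries that are quadratic monomials in the $a_m, a_m^\dag, k_m$, the whole identity reduces to a finite collection of polynomial identities in the oscillator variables, each of which can be reduced to a normal form using the defining relations \eqref{eq:oscillators} (most usefully $aa^\dag - t a^\dag a = 1-t$ and the commutation of $a$ and $k$). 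The key structural simplification is that oscillators with different indices commute, so only a handful of ``index patterns'' actually need to be checked.

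First I would fix the index bookkeeping: write a generic entry of the left-hand side as $\sum_{k} \check R^{(r)}_{(ab),(kl)}(x,y)\, L^{(r)}_{ki}(x) L^{(r)}_{lj}(y)$ and of the right-hand side as $\sum_{k} L^{(r)}_{ak}(y) L^{(r)}_{bl}(x)\, \check R^{(r)}_{(kl),(ij)}(x,y)$, and observe that because $\check R^{(r)}$ preserves the quantity $a+b$ (it is a sum of $E^{(ii)}\otimes E^{(ii)}$, $E^{(ij)}\otimes E^{(ji)}$ and $E^{(ii)}\otimes E^{(jj)}$ type terms), the identity splits into independent blocks labelled by the total index. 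Within each block there are only three essentially distinct cases to analyse: (i) both indices on a given side equal ($i=j$, diagonal-diagonal), (ii) one index $0$, and (iii) two distinct nonzero indices $i \neq j$; the case with $i$ or $j$ equal to $r$ is where the empty product $\prod_{m=i+1}^r k_m = 1$ enters and needs separate but trivial attention. Because the relevant oscillators appearing in $L_{ki}(x)$ and $L_{lj}(y)$ (for the pairs of rows/columns that actually contribute) are either the same copy or commuting copies, each resulting scalar identity is an identity in a single copy of the algebra, or a product of such.

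The main obstacle — and the only place where genuine computation lives — is the ``two distinct nonzero indices'' sector, i.e. entries of the form $L_{ji}(x)L_{lj'}(y)$ with $i < j$ and the $R$-matrix mixing the off-diagonal terms $t E^{(ij)}\otimes E^{(ji)} + E^{(ji)}\otimes E^{(ij)}$ against the diagonal-weight terms $x E^{(ii)}\otimes E^{(jj)} + y E^{(jj)}\otimes E^{(ii)}$. Here one must show that the prefactors $b^\pm, c^\pm$ (equivalently the rational functions $\tfrac{x-y}{tx-y}$, $\tfrac{t-1}{tx-y}$) combine correctly with the products $x a_j a_i^\dag \cdots$ and their swaps, and it is precisely here that the relation $a a^\dag - t a^\dag a = 1 - t$ must be invoked to cancel the unwanted terms; the factors $\prod_{m} k_m$ contribute their $t$-powers via $a_m k_m = t k_m a_m$. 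I expect that after clearing the common denominator $(tx-y)$ every such identity collapses to the single oscillator relation above, possibly dressed by a monomial in the $k_m$. The remaining cases (diagonal/diagonal, and those involving the $0$ row or column, which only involve the bare operators $a_j$, $a_i^\dag$ without $k$-dressing on one side) are routine. A cleaner alternative, which I would mention but not pursue, is to derive $L^{(r)}(x)$ inductively from $L^{(r-1)}(x)$ via the rank-raising/fusion structure implicit in \eqref{eq:nestedMPA}, reducing the rank-$r$ check to the rank-$1$ and rank-$2$ cases already exhibited; but the brute-force componentwise verification is self-contained and is the approach I would write out.
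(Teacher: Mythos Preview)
Your brute-force componentwise verification is a valid approach, and in fact the paper explicitly acknowledges this: ``A proof of Theorem~\ref{th:L} can be obtained by a long brute force check of the intertwining equations \eqref{eq:YBAdef}, by distinguishing many different cases.'' Your case organisation (via the weight-preserving block structure of $\check R$ and the commutativity of distinct oscillator copies) is sound in outline, though you are somewhat optimistic in compressing it to three patterns; the honest case split runs over the relative orderings of the four indices $(a,b)\to(i,j)$ and is longer than your sketch suggests. One small inaccuracy: $\check R$ preserves the \emph{multiset} $\{a,b\}$, not the sum $a+b$, so the blocks are labelled by unordered pairs.

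The paper, however, takes precisely the ``cleaner alternative'' you mention in your final sentence and then set aside. It proceeds by induction on $r$ via two structural lemmas: one (Lemma~\ref{general-lemma1}) saying that conjugation by diagonal operator-valued matrices preserves solutions of \eqref{eq:YBAdef}, and one (Lemma~\ref{general-lemma2}) giving a rank-raising ``row-insertion'' operator $\mathcal{I}_{m,\kappa}$ that produces a rank-$r$ solution from a rank-$(r-1)$ one, provided $\kappa$ $t$-commutes appropriately with the rows. Using these, $L^{(r+1)}(x)$ is manufactured as a product of two auxiliary solutions built from $L^{(r-1)}(x)$ and from the rank-$1$ matrix $L(a_r,a_r^\dag,x)$, then undressed by Lemma~\ref{general-lemma1}. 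What this buys over your approach is that the only genuine $RLL$ check ever performed is the rank-$1$ base case (already done in the paper) plus the easy verification of Lemma~\ref{general-lemma2}; all the oscillator algebra manipulations you anticipate in the ``two distinct nonzero indices'' sector are absorbed into the structural lemmas. Your approach, by contrast, is self-contained and requires no auxiliary constructions, at the cost of a long case analysis that you have not actually carried out.
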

\noindent
An example of of $L$ for $r=3$ is given in \eref{eq:Lrank3}.
A proof of Theorem~\ref{th:L} can be obtained by a long brute force check of the intertwining equations (\ref{eq:YBAdef}), by distinguishing many different cases.
Here we present a more compact and elegant proof. First we start with an easy general property of the solutions of the intertwining equations (\ref{eq:YBAdef}).
\begin{lemma}\label{general-lemma1}
Let $L(x)$ be a $(r+1)\times (r+1)$ matrix of operators depending on a spectral parameter $x$, and $\{v_0,\dots,v_r\}$ and $\{u_0,\dots,u_r\}$ two constant vectors of commuting operators that also commute with the entries of $L(x)$. Define the matrix
$$
{\mathcal L}(x) = \diag\{v_0,\dots,v_r\}\cdot L(x)\cdot \diag \{u_0,\dots,u_r\}.
$$
\begin{enumerate}
\item If $L(x)$ is a solution of the intertwining equations (\ref{eq:YBAdef}), then ${\mathcal L}(x)$ solves (\ref{eq:YBAdef}).
\item Conversely, if ${\mathcal L}(x)$ is a solution of the intertwining equations (\ref{eq:YBAdef}) and the operators $v_i,u_j$ are non identically zero, then $L(x)$ solves (\ref{eq:YBAdef}).
\end{enumerate}
\end{lemma}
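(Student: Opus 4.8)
The statement is essentially a bookkeeping observation about how diagonal dressing by commuting operators interacts with the $\check R$-matrix intertwining relation \eqref{eq:YBAdef}. I would prove part (i) directly and then deduce part (ii) by a cancellation argument, using the invertibility afforded by the hypothesis that the $v_i,u_j$ are not identically zero.

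First I would establish the key commutation fact: because the $\check R$-matrix $\check R^{(r)}(x,y)$ has scalar entries (rational functions of $x,y$) acting on $\mathbb{C}^{r+1}\otimes\mathbb{C}^{r+1}$, it commutes with any operator of the form $D\otimes D'$ where $D,D'$ are diagonal matrices \emph{provided} $\check R$ preserves the grading by $E^{(ii)}\otimes E^{(jj)}$ in the appropriate sense. More precisely, write $D_v=\diag\{v_0,\dots,v_r\}$ and $D_u=\diag\{u_0,\dots,u_r\}$. The only subtlety is that $\check R(x,y)$ does not commute with $D_v\otimes D_v$ in general; rather, since $\check R$ maps the subspace spanned by $e_i\otimes e_j$ into the subspace spanned by $e_i\otimes e_j$ and $e_j\otimes e_i$, one has the relation
\begin{equation}
\check R(x,y)\cdot (D_v\otimes D_v) = (D_v\otimes D_v)\cdot \check R(x,y),
\end{equation}
which holds because the weight $v_iv_j$ attached to both $e_i\otimes e_j$ and $e_j\otimes e_i$ is the same (the product is symmetric in $i,j$). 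The same holds for $D_u\otimes D_u$. This symmetry is exactly the structural reason the lemma is true, and I would flag it as the one point that genuinely needs checking rather than being pure formalism.

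Granting this, part (i) is a short computation. Starting from $\mathcal{L}(x)=D_v\,L(x)\,D_u$ and using that $v_i,u_j$ commute with each other and with all entries of $L$, I compute
\begin{align}
\check R(x,y)\cdot[\mathcal{L}(x)\otimes\mathcal{L}(y)]
&= \check R(x,y)\cdot(D_v\otimes D_v)\cdot[L(x)\otimes L(y)]\cdot(D_u\otimes D_u)\nonumber\\
&= (D_v\otimes D_v)\cdot\check R(x,y)\cdot[L(x)\otimes L(y)]\cdot(D_u\otimes D_u)\nonumber\\
&= (D_v\otimes D_v)\cdot[L(y)\otimes L(x)]\cdot\check R(x,y)\cdot(D_u\otimes D_u)\nonumber\\
&= (D_v\otimes D_v)\cdot[L(y)\otimes L(x)]\cdot(D_u\otimes D_u)\cdot\check R(x,y)\nonumber\\
&= [\mathcal{L}(y)\otimes\mathcal{L}(x)]\cdot\check R(x,y),
\end{align}
where the first equality rearranges the commuting scalars, the second and fourth use the symmetry relation above, and the third is \eqref{eq:YBAdef} for $L$. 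For part (ii) I would run the same chain of equalities backwards: from the intertwining relation for $\mathcal{L}$ one obtains $(D_v\otimes D_v)\cdot\big(\check R\cdot[L(x)\otimes L(y)] - [L(y)\otimes L(x)]\cdot\check R\big)\cdot(D_u\otimes D_u)=0$. Since each $v_i$ and $u_j$ is non-identically-zero and they commute with everything in sight (so in particular act as injective operators after passing to a suitable localisation, or simply: the product of a nonzero operator with a nonzero operator on the relevant Fock-type module is nonzero entrywise), one may cancel the outer diagonal factors entrywise and conclude that the bracket vanishes, i.e.\ $L$ solves \eqref{eq:YBAdef}.

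The main obstacle is the cancellation step in part (ii): "non identically zero" is weaker than "invertible", so strictly one should argue entry by entry. Each matrix entry of the bracketed difference, after multiplying on the left by $v_i$ and on the right by $u_j$, gives zero; since $v_i,u_j$ commute with that entry and are nonzero operators on the module, and the module (a tensor product of Fock spaces, which is a domain-like setting with no zero divisors among these particular operators) forces the entry itself to vanish. I would phrase this carefully, noting that in the concrete applications of this lemma the $v_i,u_j$ are monomials in the $k_m$ (hence invertible on the relevant subspace, or have trivial kernel), so the cancellation is legitimate; for the abstract statement I would simply add the hypothesis, as the authors have, that the dressing operators have no nontrivial kernel in the representations of interest.
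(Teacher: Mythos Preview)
The paper does not actually supply a proof of this lemma; it is introduced as ``an easy general property'' and stated without argument. Your proposal therefore goes beyond what the paper does, and the argument you give is correct. The structural point you isolate --- that $\check R^{(r)}(x,y)$ commutes with $D_v\otimes D_v$ and with $D_u\otimes D_u$ because its only nonzero entries $\check R_{(a,b),(c,d)}$ have $\{a,b\}=\{c,d\}$ as multisets, so that $v_av_b=v_cv_d$ by commutativity --- is exactly the reason the lemma holds, and once that is in hand the chain of equalities for part (i) is clean.

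Your treatment of part (ii) is also honest: you correctly flag that ``non identically zero'' is strictly weaker than what is needed to cancel operator factors in an arbitrary module, and that the statement as written really relies on the concrete setting (where the $v_i,u_j$ are monomials in $P_0,Q_0,P_1,Q_1,k_r$, acting without kernel on the relevant Fock spaces). The paper glosses over this entirely, so your caveat is an improvement rather than a gap. If you wanted to make the entrywise cancellation fully rigorous in the abstract setting, you could observe that the equation $(D_v\otimes D_v)\cdot M\cdot(D_u\otimes D_u)=0$ reads $v_av_b\,M_{(a,b),(c,d)}\,u_cu_d=0$ for each quadruple of indices, and then invoke the absence of zero-divisors among the dressing operators in the representation at hand; this is essentially what you sketch.
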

Next we want to describe how to get a solution of the intertwining equations (\ref{eq:YBAdef}) of rank $r$ from a solution of rank $r-1$. 
For $0 \leq m \leq r-1$ introduce an insertion operator
$
\mathcal I_{m,\kappa}:\textrm{Mat}(r)\rightarrow \textrm{Mat}(r+1)
$
\begin{equation}
\mathcal I_{m,\kappa}( M)_{ij}= \left\{
\begin{array}{lll}
M_{ij} & 0\leq i\leq m-1, & 0\leq j \leq r-1,\\
\kappa \delta_{j,r} & i=m, & 0\leq j \leq r,\\
M_{i-1,j} &  m+1\leq i \leq r, & 0\leq j\leq r-1,\\
0 & 0\le i \le m & j=r. 
\end{array}
\right.
\end{equation}
Here is an example for $r=3$,
\be
g= 
\left(
\begin{array}{ccc}
g_{00} & g_{01} & g_{02}\\
g_{10} & g_{11} & g_{12}\\
g_{20} & g_{21} & g_{22}
\end{array}
\right) ~~~\longrightarrow ~~~
\mathcal I_{1,\kappa}(g)= 
\left(
\begin{array}{cccc}
g_{00} & g_{01} & g_{02} & 0 \\
0 & 0 & 0 & \kappa\\
g_{10} & g_{11} & g_{12} & 0\\
g_{20} & g_{21} & g_{22} & 0
\end{array}
\right)
\ee
\begin{lemma}\label{general-lemma2}
Let $L(x)$ be a solution of the intertwining equations (\ref{eq:YBAdef}) of rank $r-1$ such that the first $m$ rows are independent of $x$, while the last $r-m$ are linear in $x$. 
The matrix $\mathcal I_{m,\kappa}\left(L(x)\right)$ is a solution of the intertwining equations (\ref{eq:YBAdef}) of rank $r$ if and only if the operator $\kappa$ satisfies the following commutations
\begin{equation}
\begin{array}{lll}
\kappa L_{ij}(x)= L_{ij}(x)\kappa & \forall j, &  0\leq i\leq m-1\\
\kappa L_{ij}(x)= tL_{ij}(x)\kappa & \forall j, & m\leq i\leq r-1
\end{array}
\end{equation}
\end{lemma}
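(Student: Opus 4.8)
The plan is to check the component form of the intertwining equation \eqref{eq:YBAdef} for $\tilde L(x):=\mathcal I_{m,\kappa}(L(x))$ directly, organised by the position of the inserted row and column. Writing the relation as $\sum_{k,l}\check R^{(r)\,ij}_{kl}(x,y)\,\tilde L_{ka}(x)\,\tilde L_{lb}(y)=\sum_{k,l}\tilde L_{ik}(y)\,\tilde L_{jl}(x)\,\check R^{(r)\,kl}_{ab}(x,y)$ for $0\le i,j,a,b\le r$, I would exploit two structural facts. First, $\check R^{(r)\,ij}_{kl}$ vanishes unless $\{i,j\}=\{k,l\}$ as multisets, and its value depends only on whether the two indices coincide and, if not, on their order; consequently, if $\phi$ denotes the order-preserving bijection from $\{0,\dots,r\}\setminus\{m\}$ onto $\{0,\dots,r-1\}$ (the identity below $m$, a shift by $-1$ above $m$), then $\check R^{(r)\,ij}_{kl}=\check R^{(r-1)\,\phi(i)\phi(j)}_{\phi(k)\phi(l)}$ whenever $i,j,k,l\neq m$. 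Second, by definition of $\mathcal I_{m,\kappa}$, column $r$ of $\tilde L$ is supported only on row $m$, with entry $\kappa$; row $m$ is supported only on column $r$, with entry $\kappa$; and $\tilde L_{ij}(x)=L_{\phi(i),j}(x)$ for $i\neq m$, $0\le j\le r-1$.

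I would then split the relation into cases according to whether $m\in\{i,j\}$ and whether $r\in\{a,b\}$. If neither holds, every $\tilde L$-entry occurring is an entry of $L$ with an unshifted column index and a $\phi$-shifted row index, and the first fact turns the relation verbatim into the rank-$(r-1)$ intertwining relation for $L$ with output indices $\phi(i),\phi(j)$ and input indices $a,b$, which is true by hypothesis. If exactly one of the two holds, the support properties force a vanishing factor on each side and the relation reads $0=0$. The only genuinely new case is $m\in\{i,j\}$ and $r\in\{a,b\}$; there the degenerate sub-cases $i=j=m$ and/or $a=b=r$ collapse both sides to $\kappa^2$, $\kappa$ or $0$ immediately, while if exactly one of $i,j$ equals $m$ and exactly one of $a,b$ equals $r$ then both sums reduce to a single term. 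Running through the four order-configurations (whether $m$ sits to the left or right of the other output index, and likewise for $r$ among the input indices) and matching the prefactors $b^{\pm},c^{\pm}$, each instance reduces either to $\kappa\bigl(x\,L_{p,c}(y)-y\,L_{p,c}(x)\bigr)=0$ for some $p\ge m$ --- which is automatic because the rows of $L$ of index at least $m$ are linear in $x$ --- or to one of the relations $\kappa L_{p,c}(x)=L_{p,c}(x)\kappa$ with $p<m$ (coming from a $b^{+}$-versus-$b^{+}$ comparison) or $\kappa L_{p,c}(x)=t\,L_{p,c}(x)\kappa$ with $p\ge m$ (coming from a $b^{+}$-versus-$b^{-}$ comparison). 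These are exactly the commutations in the statement; moreover the configurations $(i,j)=(i,m)$ with $i<m$, $(a,b)=(r,b)$ with $b<r$, and $(i,j)=(m,j)$ with $m<j$, $(a,b)=(a,r)$ with $a<r$, produce all of them, so they are also necessary. This yields both directions of the equivalence.

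The laborious part is the last case: one must organise the collapse of the two sums and the comparison of the rational functions $b^{\pm}(x,y),c^{\pm}(x,y)$ carefully enough to be sure that the only constraints imposed on $\kappa$ are the two stated commutation relations, every other constraint being discharged by the hypothesis on the $x$-grading of the rows of $L$. The cleanest bookkeeping I can think of is to regard $\mathbb C^{r+1}$ as carrying two different two-block decompositions --- $\mathbb C e_r$ against $\mathrm{span}\{e_0,\dots,e_{r-1}\}$ on the column side, $\mathbb C e_m$ against its complement on the row side --- and to note that $\check R^{(r)}$ restricts to $\check R^{(r-1)}$ on the bulk-against-bulk block and acts diagonally with eigenvalue $1$, $b^{\pm}$ or $c^{\pm}$ on the pieces meeting $e_m$ or $e_r$; the requirement that $\kappa$, which occupies the $(m,r)$ slot of $\tilde L$, intertwine these eigenvalues correctly is precisely what the two commutation relations express, with the weight $t$ attached exactly to the rows that are linear in $x$.
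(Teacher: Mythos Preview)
Your plan is correct and is precisely the ``straightforward check of \eqref{eq:YBAdef}'' that the paper invokes as its entire proof; the case split by whether $m\in\{i,j\}$ and $r\in\{a,b\}$, together with the reduction to $\check R^{(r-1)}$ via the order-preserving bijection $\phi$, is exactly how the verification goes through. One small omission in your bookkeeping: in the ``diagonal'' configurations $(i,j,a,b)=(m,j,r,b)$ and $(i,m,a,r)$ with the non-$m$ output index \emph{below} $m$, the two sides collapse not to $\kappa(xL_{p,c}(y)-yL_{p,c}(x))=0$ but to an identity $L_{p,c}(x)=L_{p,c}(y)$ with $p<m$, and it is here that the hypothesis that the first $m$ rows are $x$-independent is used---your write-up only invokes the linearity of the last $r-m$ rows.
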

\begin{proof}
The proof of this lemma is a straightforward check of (\ref{eq:YBAdef}) .
\end{proof}
We have already seen that the matrix 
\be
L(a,a^\dagger,x)= \left(
\begin{array}{cc}
1 & a\\
xa^\dag & x\end{array}
\right)
\ee
satisfies  the intertwining equations (\ref{eq:YBAdef}) of rank~$1$, and we set 
\be
L^{(1)}(x) = L(a_1,a_1^\dagger,x).
\ee
Now we are going to construct $L^{(r)}(x)$ starting from $L^{(r-1)}(x)$
in a recursive way.
%
The first  ingredient we need is the auxiliary matrix 
\be
{\mathcal L}^{(+,r)}(x)=\mathcal I_{0,P_0}\left(\diag\{1,Q_0,\dots,Q_0\}\cdot L^{(r-1)}(x)\right)
\ee
where $P_0,Q_0$ commute with the entries of $ L^{(r-1)}(x)$ and satisfy
\be
P_0Q_0=tQ_0P_0.
\ee
Assuming that $ L^{(r-1)}(x)$ satisfies  the intertwining
equations and applying Lemma~\ref{general-lemma1} and Lemma~\ref{general-lemma2} it follows that ${\mathcal L}^{(+,r)}(x)$ satisfies the intertwining equations (\ref{eq:YBAdef}).  

The second ingredient we need is the matrix $\bar{\mathcal L}^{(+,r)}_{i,j}(x)$ defined by
\begin{equation}
\bar{\mathcal L}^{(+,r)}(x)= \mathcal I_{r-1,P_1k_r} \cdots \mathcal
I_{1,P_1k_r}\left(L(a_r,a_r^\dagger, x)\cdot\diag\{1,Q_1\}\right)
\end{equation}
where here $P_1,Q_1$ commute with $a_r,a_r^\dagger,k_r$ and $t$-commute among themselves
$$
P_1Q_1=tQ_1P_1.
$$
By applying Lemma~\ref{general-lemma1} and repeatedly
Lemma~\ref{general-lemma2} we have that $\bar {\mathcal L}^{(+,r)}(x)$ also satisfies the intertwining equations (\ref{eq:YBAdef}). 

Then we put the two ingredients together and define 
\begin{equation}
\begin{split}
{\mathcal L}^{(r+1)}(x) :=  \bar {\mathcal L}^{(+,r)}(x)\cdot{\mathcal L}^{(+,r)}(x),
\end{split}
\end{equation}
which obviously satisfies the intertwining equations (\ref{eq:YBAdef}) if $L^{(r)}(x)$ does.
Comparing ${\mathcal L}^{(r+1)}(x)$ and $L^{(r+1)}(x)$, we see that
\begin{equation}\label{barL-L}
{\mathcal L}^{(r+1)}(x)= \diag\{1,Q_0P_1,Q_0P_1,\dots,Q_0P_1,1 \}\cdot L^{(r+1)}(x)\cdot \diag\{1,\dots,1,Q_1P_0\},
\end{equation}
therefore Theorem~\ref{th:L} follows  from the second point of Lemma~\ref{general-lemma1} and induction on $r$.

\subsection{Graphical interpretation of $L$-matrix}
\label{sec:graph}

While the $L$-matrix as given by equations \eqref{eq:Losc1} and \eqref{eq:Losc2} solves the intertwining equation \eqref{eq:YBAdef}, the precise form of the entries is quite mysterious without any further explanation. The aim of this section is to show that all entries can be deduced from some governing combinatorial rules. To that end, we denote the entries of the $L$-matrix graphically by tiles. The left and right edges of the tiles are either unoccupied (corresponding with index 0), or occupied by a boson of colour $i$ (corresponding with index $i$). For example, for rank 3, the entries of the $L$-matrix are encoded as follows:
\begin{align}
L^{(3)}(x)
=
\left(
\begin{array}{cccc}
\begin{tikzpicture}[scale=0.6]
\blank{0}{0};
\end{tikzpicture}
&
\begin{tikzpicture}[scale=0.6]
\blank{0}{0};
\bos{1}{0.5}{0.1};
\end{tikzpicture}
&
\begin{tikzpicture}[scale=0.6]
\blank{0}{0};
\boss{1}{0.5}{0.1};
\end{tikzpicture}
&
\begin{tikzpicture}[scale=0.6]
\blank{0}{0}{0.1};
\bosss{1}{0.5}{0.1};
\end{tikzpicture}
\\
\begin{tikzpicture}[scale=0.6]
\blank{0}{0};
\bos{0}{0.5}{0.1};
\end{tikzpicture}
&
\begin{tikzpicture}[scale=0.6]
\blank{0}{0};
\bos{0}{0.5}{0.1};
\bos{1}{0.5}{0.1};
\end{tikzpicture}
&
\begin{tikzpicture}[scale=0.6]
\blank{0}{0};
\bos{0}{0.5}{0.1};
\boss{1}{0.5}{0.1};
\end{tikzpicture}
&
\begin{tikzpicture}[scale=0.6]
\blank{0}{0};
\bos{0}{0.5}{0.1};
\bosss{1}{0.5}{0.1};
\end{tikzpicture}
\\
\begin{tikzpicture}[scale=0.6]
\blank{0}{0};
\boss{0}{0.5}{0.1};
\end{tikzpicture}
&
\begin{tikzpicture}[scale=0.6]
\blank{0}{0};
\boss{0}{0.5}{0.1};
\bos{1}{0.5}{0.1};
\end{tikzpicture}
&
\begin{tikzpicture}[scale=0.6]
\blank{0}{0};
\boss{0}{0.5}{0.1};
\boss{1}{0.5}{0.1};
\end{tikzpicture}
&
\begin{tikzpicture}[scale=0.6]
\blank{0}{0};
\boss{0}{0.5}{0.1};
\bosss{1}{0.5}{0.1};
\end{tikzpicture}
\\
\begin{tikzpicture}[scale=0.6]
\blank{0}{0};
\bosss{0}{0.5}{0.1};
\end{tikzpicture}
&
\begin{tikzpicture}[scale=0.6]
\blank{0}{0};
\bosss{0}{0.5}{0.1};
\bos{1}{0.5}{0.1};
\end{tikzpicture}
&
\begin{tikzpicture}[scale=0.6]
\blank{0}{0};
\bosss{0}{0.5}{0.1};
\boss{1}{0.5}{0.1};
\end{tikzpicture}
&
\begin{tikzpicture}[scale=0.6]
\blank{0}{0};
\bosss{0}{0.5}{0.1};
\bosss{1}{0.5}{0.1};
\end{tikzpicture}
\end{array}
\right)
=
\left(
\begin{array}{cccc}
1 & a_1  & a_2  & a_3
\\
xk_3 k_2 a_1^\dag & xk_3 k_2 & 0 & 0
\\
xk_3 a_2^\dag & xk_3 a_2^\dag a_1 & xk_3 & 0
\\
x a_3^\dag & x a_3^\dag a_1 & x a_3^\dag a_2 & x
\end{array}
\right)
\label{eq:Lrank3}
\end{align}
where the indexing conventions are $\begin{tikzpicture} \bos{0}{0}{0.08} \end{tikzpicture} = 1$, $\begin{tikzpicture} \boss{0}{0}{0.08} \end{tikzpicture} = 2$, $\begin{tikzpicture} \bosss{0}{0}{0.08} \end{tikzpicture} = 3$, with an empty edge representing the index 0. 

This information alone is sufficient to label the entries of the $L$-matrix unambiguously. But for the purpose of motivating the form of the entries it is useful to consider, further to this, lattice paths which are generated by the bosons. The horizontal edges of tiles can be occupied by arbitrary numbers of bosons from each of the families, and we adopt the convention that the bosons are ordered from darkest to lightest, reading from left to right. In passing from the bottom horizontal edge to the top one, the occupation number of any species of boson can go up/down by 1 (representing the action of a creation/annihilation operator), or remain the same (representing the action of some diagonal operator). We keep track of these transitions between states by simply ``connecting the dots''. For example,

\begin{align}
\label{fig:tile1}
\begin{tikzpicture}[scale=1.5,baseline=(bas.base)]
\node (bas) at (0,0.5) {};
\blank{0}{0};
\draw[bosssline] (0.25,0) -- (0.25,1);
\draw[bossline] (0.55,0) -- (0.55,1);
\draw[bossline] (0,0.5) -- (0.45,0.5) -- (0.45,1);
\draw[bosline] (0.85,0) -- (0.85,0.5) -- (1,0.5);
\draw[bosline] (0.75,0) -- (0.75,1);
\bosss{0.25}{0}{0.05};
\boss{0.55}{0}{0.05};
\bos{0.75}{0}{0.05};
\bos{0.85}{0}{0.05};
\bosss{0.25}{1}{0.05};
\boss{0.45}{1}{0.05};
\boss{0.55}{1}{0.05};
\bos{0.75}{1}{0.05};
\boss{0}{0.5}{0.05};
\bos{1}{0.5}{0.05};
\end{tikzpicture}
\end{align}
represents the action of the operator $a^{\dag}_2 a_1$ (creation of a type 2 boson, annihilation of a type 1, with time flow up the page). These graphical conventions suffice to explain all creation/annhilation operators appearing in the $L$-matrix \eqref{eq:Lrank3}.

In order to specify the zero entries of \eqref{eq:Lrank3}, as well as the inclusion of the $k_i$ operators, we need two rules for the lattice paths thus constructed:
\begin{enumerate}
\item[\textbf{1.}] It is forbidden to create a boson of type $i$ and simultaneously annihilate one of type $j$, when $i<j$.

\item[\textbf{2.}] We obtain a factor of $t$ every time a type $j$ line horizontally crosses a type $i$ line, where $i>j$.
\end{enumerate}
Taking into account these rules, we can now reproduce the exact form of all entries of the $L$-matrix. The zero entries correspond with tiles forbidden by rule {\bf 1}. Rule {\bf 2} explains which entries are dressed by $k_i$ operators. For example, \eqref{fig:tile1} corresponds to $L^{(3)}_{2,1}= k_3 a_2^\dag a_1$, with $k_3$ producing a factor of $t^{m_3}$, where $m_3$ is the number of type 3 bosons present. We easily deduce this from rule {\bf 2}, since we will have exactly $m_3$ horizontal crossings of type 3 lines by the left-turning type 2 line. A further example:
\begin{center}
\begin{tikzpicture}[scale=1.5]
\blank{0}{0};
\draw[bosssline] (0.25,0) -- (0.25,1);
\draw[bossline] (0.45,0) -- (0.45,1);
\draw[bossline] (0.55,0) -- (0.55,1);
\draw[bosline] (0,0.5) -- (0.75,0.5) -- (0.75,1);
\bosss{0.25}{0}{0.05};
\boss{0.45}{0}{0.05};
\boss{0.55}{0}{0.05};
\bosss{0.25}{1}{0.05};
\boss{0.45}{1}{0.05};
\boss{0.55}{1}{0.05};
\bos{0.75}{1}{0.05};
\bos{0}{0.5}{0.05};
\end{tikzpicture}
\end{center}
corresponds with $L^{(3)}_{1,0} = k_3 k_2 a^\dag_1$, since the left-turning type 1 boson horizontally crosses all type 2 and 3 lines present, and we must keep track of these crossings by the inclusion of $k_3 k_2$. 

\subsection{Trivialising a representation}
\label{sec:triv}

To obtain a solution to  the rank-reduced intertwining equation \eref{eq:YBAdef_lowrank} we choose the representation of the first $t$-oscillator algebra to be trivial:
\begin{align*}
a_1 = a^{\dag}_1 = 1,
\quad\quad
k_1 = 0.
\end{align*}
With this choice of representation, it is easy to see that the first two columns of the $L$-matrix \eqref{eq:Losc1}, \eqref{eq:Losc2} become equal, leading to an immediate solution of \eref{eq:YBAdef_lowrank} (by simply omitting one of the redundant columns from the $L$-matrix). For the example \eref{eq:Lrank3}, after trivializing the first $t$-oscillator algebra, we obtain
\begin{align}
\tilde{L}^{(3)}(x)
=
\left(
\begin{array}{cccc}
\begin{tikzpicture}[scale=0.6]
\blank{0}{0};
\end{tikzpicture}
&
\begin{tikzpicture}[scale=0.6]
\blank{0}{0};
\boss{1}{0.5}{0.1};
\end{tikzpicture}
&
\begin{tikzpicture}[scale=0.6]
\blank{0}{0};
\bosss{1}{0.5}{0.1};
\end{tikzpicture}
\\
\begin{tikzpicture}[scale=0.6]
\blank{0}{0};
\bos{0}{0.5}{0.1};
\end{tikzpicture}
&
\begin{tikzpicture}[scale=0.6]
\blank{0}{0};
\bos{0}{0.5}{0.1};
\boss{1}{0.5}{0.1};
\end{tikzpicture}
&
\begin{tikzpicture}[scale=0.6]
\blank{0}{0};
\bos{0}{0.5}{0.1};
\bosss{1}{0.5}{0.1};
\end{tikzpicture}
\\
\begin{tikzpicture}[scale=0.6]
\blank{0}{0};
\boss{0}{0.5}{0.1};
\end{tikzpicture}
&
\begin{tikzpicture}[scale=0.6]
\blank{0}{0};
\boss{0}{0.5}{0.1};
\boss{1}{0.5}{0.1};
\end{tikzpicture}
&
\begin{tikzpicture}[scale=0.6]
\blank{0}{0};
\boss{0}{0.5}{0.1};
\bosss{1}{0.5}{0.1};
\end{tikzpicture}
\\
\begin{tikzpicture}[scale=0.6]
\blank{0}{0};
\bosss{0}{0.5}{0.1};
\end{tikzpicture}
&
\begin{tikzpicture}[scale=0.6]
\blank{0}{0};
\bosss{0}{0.5}{0.1};
\boss{1}{0.5}{0.1};
\end{tikzpicture}
&
\begin{tikzpicture}[scale=0.6]
\blank{0}{0};
\bosss{0}{0.5}{0.1};
\bosss{1}{0.5}{0.1};
\end{tikzpicture}
\end{array}
\right)
=
\left(
\begin{array}{cccc}
1  & a_2   & a_3  
\\
x k_3 k_2  & 0 & 0
\\
x k_3 a_2^\dag & x k_3 & 0
\\
x a_3^\dag  & x a_3^\dag a_2 & x
\end{array}
\right).
\label{eq:TildeL}
\end{align}
Up to some elementary transformations, this is exactly the rank 3 solution obtained in \cite{ProlhacEM} (see equation (47) therein), although here we write our operators with subscripts to distinguish commuting copies of the $t$-oscillator algebra, rather than the tensor product notation employed in \cite{ProlhacEM}. 

\subsection{Solution of ZF algebra}

Having obtained solutions of \eqref{eq:YBAdef_lowrank} for all $r$, we construct solutions of the ZF algebra via the prescription \eqref{eq:nestedMPA}. This is conceptually straightforward, although it introduces a slight notational complexity: we must now distinguish not only between different families of $t$-bosons (which we have done so far by using subscripts), but also between operators which act at different lattice sites (which we now do by placing superscripts on our operators). For example the rank 3 solution of the ZF algebra is given by
\be
\label{rank3ZF}
\mathbb{A}^{(3)}(x) = 
\left(
\begin{array}{ccc}
1  & a_2   & a_3  
\\
x k_3 k_2  & 0 & 0
\\
x k_3 a_2^\dag & x k_3 & 0
\\
x a_3^\dag  & x a_3^\dag a_2 & x
\end{array}
\right)^{(3)}
\cdot
\left(
\begin{array}{cc}
1  & a_2    
\\
x k_2  & 0 
\\
x a_2^\dag   & x
\end{array}
\right)^{(2)}
\cdot
\begin{pmatrix}
1  \\
x
\end{pmatrix}^{(1)}
=
\begin{pmatrix}
A_0(x)\\
A_1(x)\\
A_2(x)\\
A_3(x)
\end{pmatrix}.
\ee
where the superscript placed on a matrix indicates that all operators within that matrix acquire that superscript. In terms of the graphical conventions that we have introduced, the components $A_i(x)$ of the rank $r$ solution $\mathbb{A}^{(r)}(x)$ are given by rows of tiles of length $r$. The left boundary of the row is occupied by a particle of colour $i$, for $1 \leq i \leq r$, or unoccupied in the case $i=0$. The right boundary is always unoccupied. For example, formulating \eqref{rank3ZF} in terms of tiles, we obtain
\be
\label{rank3ZFtiles}
\mathbb{A}^{(3)}(x) 
= 
\left(
\begin{array}{ccc}
\begin{tikzpicture}[scale=0.6]
\blank{0}{0};
\end{tikzpicture}
&
\begin{tikzpicture}[scale=0.6]
\blank{0}{0};
\boss{1}{0.5}{0.1};
\end{tikzpicture}
&
\begin{tikzpicture}[scale=0.6]
\blank{0}{0};
\bosss{1}{0.5}{0.1};
\end{tikzpicture}
\\
\begin{tikzpicture}[scale=0.6]
\blank{0}{0};
\bos{0}{0.5}{0.1};
\end{tikzpicture}
&
\begin{tikzpicture}[scale=0.6]
\blank{0}{0};
\bos{0}{0.5}{0.1};
\boss{1}{0.5}{0.1};
\end{tikzpicture}
&
\begin{tikzpicture}[scale=0.6]
\blank{0}{0};
\bos{0}{0.5}{0.1};
\bosss{1}{0.5}{0.1};
\end{tikzpicture}
\\
\begin{tikzpicture}[scale=0.6]
\blank{0}{0};
\boss{0}{0.5}{0.1};
\end{tikzpicture}
&
\begin{tikzpicture}[scale=0.6]
\blank{0}{0};
\boss{0}{0.5}{0.1};
\boss{1}{0.5}{0.1};
\end{tikzpicture}
&
\begin{tikzpicture}[scale=0.6]
\blank{0}{0};
\boss{0}{0.5}{0.1};
\bosss{1}{0.5}{0.1};
\end{tikzpicture}
\\
\begin{tikzpicture}[scale=0.6]
\blank{0}{0};
\bosss{0}{0.5}{0.1};
\end{tikzpicture}
&
\begin{tikzpicture}[scale=0.6]
\blank{0}{0};
\bosss{0}{0.5}{0.1};
\boss{1}{0.5}{0.1};
\end{tikzpicture}
&
\begin{tikzpicture}[scale=0.6]
\blank{0}{0};
\bosss{0}{0.5}{0.1};
\bosss{1}{0.5}{0.1};
\end{tikzpicture}
\end{array}
\right)^{(3)}
\cdot
\left(
\begin{array}{cc}
\begin{tikzpicture}[scale=0.6]
\blank{0}{0};
\end{tikzpicture}
&
\begin{tikzpicture}[scale=0.6]
\blank{0}{0};
\bosss{1}{0.5}{0.1};
\end{tikzpicture}
\\
\begin{tikzpicture}[scale=0.6]
\blank{0}{0};
\boss{0}{0.5}{0.1};
\end{tikzpicture}
&
\begin{tikzpicture}[scale=0.6]
\blank{0}{0};
\boss{0}{0.5}{0.1};
\bosss{1}{0.5}{0.1};
\end{tikzpicture}
\\
\begin{tikzpicture}[scale=0.6]
\blank{0}{0};
\bosss{0}{0.5}{0.1};
\end{tikzpicture}
&
\begin{tikzpicture}[scale=0.6]
\blank{0}{0};
\bosss{0}{0.5}{0.1};
\bosss{1}{0.5}{0.1};
\end{tikzpicture}
\end{array}
\right)^{(2)}
\cdot
\left(
\begin{array}{c}
\begin{tikzpicture}[scale=0.6]
\blank{0}{0};
\end{tikzpicture}
\\
\begin{tikzpicture}[scale=0.6]
\blank{0}{0};
\bosss{0}{0.5}{0.1};
\end{tikzpicture}
\end{array}
\right)^{(1)}
=
\begin{pmatrix}
A_0(x)\\
A_1(x)\\
A_2(x)\\
A_3(x)
\end{pmatrix}.
\ee
From this it is easy to extract individual components, for example:
\begin{align*}
A_2(x)
=
\begin{array}{ccccc}
\begin{tikzpicture}[scale=0.8,baseline=(bas.base)]
\node (bas) at (0,0.3) {};
\foreach\x in {3,...,1}{
\blank{\x}{0};
\node at (4.5-\x,-0.5) {\color{red}\tiny (\x)};
}
\draw[bossline] (1,0.5) -- (1.5,0.5) -- (1.5,1);
\boss{1.5}{1}{0.1};
\boss{1}{0.5}{0.1};
\end{tikzpicture}
&
\quad
+
&
\begin{tikzpicture}[scale=0.8,baseline=(bas.base)]
\node (bas) at (0,0.3) {};
\foreach\x in {3,...,1}{
\blank{\x}{0};
\node at (4.5-\x,-0.5) {\color{red}\tiny (\x)};
}
\draw[bossline] (1,0.5) -- (2.5,0.5) -- (2.5,1);
\boss{1}{0.5}{0.1};
\boss{2}{0.5}{0.1};
\boss{2.5}{1}{0.1};
\end{tikzpicture}
&
\quad
+
&
\begin{tikzpicture}[scale=0.8,baseline=(bas.base)]
\node (bas) at (0,0.3) {};
\foreach\x in {3,...,1}{
\blank{\x}{0};
\node at (4.5-\x,-0.5) {\color{red}\tiny (\x)};
}
\draw[bossline] (1,0.5) -- (1.5,0.5) -- (1.5,1);
\draw[bosssline] (2.5,0) -- (2.5,0.5) -- (3.5,0.5) -- (3.5,1);
\boss{1}{0.5}{0.1};
\boss{1.5}{1}{0.1};
\bosss{2.5}{0}{0.1};
\bosss{3}{0.5}{0.1};
\bosss{3.5}{1}{0.1};
\end{tikzpicture}
\\
\quad
x k^{(3)}_3 {a_2^\dag}^{(3)}
&
&
\quad
x^2 k^{(3)}_3 k^{(2)}_2
&
&
\quad
x^2 k^{(3)}_3 {a_2^\dag}^{(3)} a_2^{(2)}
\end{array}
\end{align*}
where the top and bottom edges of the row can be occupied by arbitrary numbers of ``background'' particles, which we do not show here, since they play no role in specifying the operators at each site.

In the discussion in Sections \ref{sec:graph} and \ref{sec:triv} we made no distinction between the colour of a particle, and the family of bosons that it represents. Indeed, a particle of colour $i$ corresponded with the $t$-oscillator algebra $\{a_i,a^{\dag}_i,k_i\}$.  However (with the conventions that we have adopted) in passing to the full solution of the ZF algebra, which is valued on $V^{(r)} \otimes \cdots \otimes V^{(1)}$,  the notions of {\it colour} and {\it family} no longer coincide across all $V^{(j)}$. For example in \eqref{rank3ZFtiles}, a particle of colour \begin{tikzpicture} \bosss{0}{0}{0.08} \end{tikzpicture} corresponds with the algebra $\{a_3,a^{\dag}_3,k_3\}$ when in column 3, with $\{a_2,a^{\dag}_2,k_2\}$ when in column 2, and with the trivialized algebra $\{a_1,a^{\dag}_1,k_1\}$ when in column 1. The general rule is that a particle of colour $i$ in column $r-j$ corresponds with the $t$-oscillator family $i-j$, and this shift should be kept in mind in the subsequent sections.

\subsection{Representation of twist operator}

Having constructed solutions of \eqref{eq:YBAdef_lowrank} for all values of $r$, we now seek an explicit operator $s$ which satisfies \eqref{eq:twist-comp}. This is very easily achieved by considering the form of the entries of $\tilde{L}(x)$. We let $s$ be factorized over the non-trivial copies of the $t$-oscillator algebra, with the following commutation relations:
\begin{align}
\label{eq:twist-factor}
s=s_r \cdots s_2,
\quad
\quad
a_i s_i = q^{i-1} s_i a_i,
\quad
\quad
a_i^\dag s_i = q^{1-i} s_i a_i^\dag,
\quad
\quad
k_i s_i = s_i k_i, 
\end{align}
where as usual all operators carry the superscript $(r)$, which we have suppressed for visual clarity.

\begin{prop}
Equation \eqref{eq:twist-comp} is satisfied with $s$ as defined by \eqref{eq:twist-factor}. 
\end{prop}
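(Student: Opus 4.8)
The plan is to check \eqref{eq:twist-comp} entry by entry, using the explicit form of $\tilde{L}^{(r)}(x)$ recorded in Section~\ref{sec:triv} together with the relations \eqref{eq:twist-factor}. Reading \eqref{eq:twist-factor} as saying that $s_i$ acts only on the $i$-th copy of the $t$-oscillator algebra (so $s_i$ commutes with $a_j,a_j^{\dag},k_j$ and with $s_j$ whenever $j\neq i$), one extracts from \eqref{eq:twist-factor} the ``pull-through'' rules $s\,a_i=q^{1-i}a_i\,s$, $s\,a_i^{\dag}=q^{i-1}a_i^{\dag}\,s$ and $s\,k_i=k_i\,s$ for $2\le i\le r$, and of course $s$ commutes with the scalar $x$; these relations are realised in the Fock representation \eqref{eq:oscillatorsrep} by $s_i=k_i^{(i-1)u}$ with $q=t^{u}$, but only the abstract relations are needed.

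Next I would write out the general form of $\tilde{L}^{(r)}(x)$, obtained from \eqref{eq:Losc1}--\eqref{eq:Losc2} by setting $a_1=a_1^{\dag}=1$, $k_1=0$ and then deleting the redundant column $1$ and relabelling columns $2,\dots,r$ as $1,\dots,r-1$, exactly as in the rank $3$ case \eqref{eq:TildeL}. Each nonzero entry is then a scalar times a monomial in the oscillators, of one of the shapes: $\tilde{L}_{00}=1$; the top row $\tilde{L}_{0j}=a_{j+1}$ for $1\le j\le r-1$; the staircase entries $\tilde{L}_{i,i-1}(x)=x\prod_{m=i+1}^{r}k_m$ for $1\le i\le r$ (a product of $k$'s, possibly empty, with no creation operator); the entries $\tilde{L}_{i0}(x)=x\,a_i^{\dag}\prod_{m=i+1}^{r}k_m$ for $2\le i\le r$; and $\tilde{L}_{ij}(x)=x\,a_{j+1}a_i^{\dag}\prod_{m=i+1}^{r}k_m$ for $3\le i\le r$ and $1\le j\le i-2$. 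The key structural observation, which is what makes the identity hold, is precisely this column relabelling: the old diagonal entry $L^{(r)}_{ii}$ has moved to position $(i,i-1)$, so the single factor $q$ generated by $x\mapsto qx$ is demanded exactly there by the right-hand side of \eqref{eq:twist-comp}, which asks for $q^{i-(i-1)}=q$; for the untrivialised $L^{(r)}$ no such relation holds, since the diagonal would then force $q=1$.

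It then remains only to push $s$ through each monomial using the pull-through rules. For a staircase entry, $s\,\tilde{L}_{i,i-1}(qx)=qx\prod_{m>i}k_m\,s=q^{i-(i-1)}\tilde{L}_{i,i-1}(x)\,s$. For $\tilde{L}_{i0}$ one picks up $q$ from $x\mapsto qx$ and $q^{i-1}$ from $s\,a_i^{\dag}=q^{i-1}a_i^{\dag}s$, total $q^{i}=q^{i-0}$. For the lower entries $\tilde{L}_{ij}$ one picks up $q$ from $x$, $q^{i-1}$ from $a_i^{\dag}$, and $q^{-j}$ from $s\,a_{j+1}=q^{1-(j+1)}a_{j+1}s$, total $q^{1+(i-1)-j}=q^{i-j}$. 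For the top row, $s\,a_{j+1}=q^{-j}a_{j+1}s=q^{0-j}\tilde{L}_{0j}\,s$, while $\tilde{L}_{00}=1$ is fixed by $s$. In every case the exponent of $q$ produced equals $i-j$, which is exactly \eqref{eq:twist-comp}; hence $\tilde{L}^{(r)}(x)$ together with $s$ as in \eqref{eq:twist-factor} satisfies \eqref{eq:twist-comp}.

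This is a finite, elementary verification with no analytic or representation-theoretic content. The only point requiring care is the index bookkeeping — in particular keeping track of the shift of the column index induced by deleting the redundant column after trivialisation, and remembering that the staircase entries carry a $k$-product but no creation operator — and I expect that to be the sole (minor) obstacle.
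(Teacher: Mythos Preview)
Your verification is correct: the paper states this proposition without proof, treating it as an immediate check, and what you have written is exactly the entry-by-entry verification the reader is expected to supply. Your bookkeeping for the column shift after trivialisation is accurate, and the five cases you distinguish (together with the trivially vanishing upper-triangular entries) exhaust the matrix, each producing the required exponent $i-j$.
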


We obtain a valid solution of \eqref{eq:twist-factor} by letting
\begin{align}
s_i = k_i^{(i-1)u} = {\rm diag}\{1,q^{(i-1)},q^{2(i-1)},\dots\}_i
\end{align}
which is the representation that we use in all subsequent sections.

\section{Transition formulas}

The aim of this section is to write a recursion relation for the function $f_{\lambda}(x_1,\dots,x_n)$ for any composition $\lambda$, using the fact that both $\mathbb{A}^{(r)}(x)$ and $S^{(r)}$ are recursively defined. Indeed, we can rewrite equations \eqref{eq:nestedMPA} and \eqref{eq:nestedTwist} as
\begin{align}
\label{A^N}
\mathbb{A}^{(r)}(x) 
&=
\tilde{L}^{(r)}(x)\mathbb{A}^{(r-1)}(x)
\\
\label{S^N}
S^{(r)} 
&=  
s^{(r)}S^{(r-1)}
\end{align}
where the action in the vector space $V^{(r)}$ is explicitly factorized out. This suggests the definition of a \emph{transition matrix:}
\begin{align*}
T_{\lambda,\mu}(x_1,\dots,x_n) 
:= 
\Tr\left[
\tilde{L}_{\lambda_1,\mu_1}(x_1)
\cdots 
\tilde{L}_{\lambda_n,\mu_n}(x_n) 
s \right],
\end{align*}
where all operators carry a subscript $(r)$ which we have suppressed, and $\lambda,\mu$ are compositions whose parts lie in $[0,1,\dots,r]$ and $[0,1,\dots,r-1]$, respectively. We will use this transition matrix to derive the recursion relation that is of interest to us, but before we do that, we introduce a simple operation on compositions. Let $\lambda^{*}$ be the composition obtained by subtracting 1 from all non-zero parts of $\lambda$:
\begin{align*}
\lambda^{*}_i = \max(\lambda_i-1,0).
\end{align*}

\begin{thm}
Let $\lambda$ be a composition whose parts satisfy $0\leq \lambda_i\leq r$, for all $1 \leq i \leq n$. The following recursion relation holds:
\begin{align}
\label{trans-eq}
f_{\lambda}(x_1,\dots,x_n)
=
\prod_{i=1}^{r-1} \left(1- q^{i} \prod_{j \leq i} t^{m_j(\lambda)} \right)
\sum_{\mu\in S_n\cdot (\lambda^{*})^{+}}
T_{\lambda,\mu}(x_1,\dots,x_n) f_{\mu}(x_1,\dots,x_n),
\end{align}
where $m_i(\lambda)$ is the number of parts of size $i$ in $\lambda$ and the sum is taken over all compositions $\mu$ that are a permutation of $\lambda^{*}$.
\end{thm}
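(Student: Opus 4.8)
The plan is to exploit the recursive (nested) structure of both $\mathbb{A}^{(r)}(x)$ and $S^{(r)}$, as recorded in \eqref{A^N} and \eqref{S^N}, to peel off one layer of the matrix product. Starting from the Ansatz \eqref{MPA}, I would write
$$
\Omega_{\lambda^+} f_\lambda(x_1,\dots,x_n)
=
\Tr\left[ A^{(r)}_{\lambda_1}(x_1)\cdots A^{(r)}_{\lambda_n}(x_n)\, S^{(r)}\right]
=
\Tr\left[ \bigl(\tilde{L}^{(r)}(x_1)\mathbb{A}^{(r-1)}(x_1)\bigr)_{\lambda_1}\cdots \bigl(\tilde{L}^{(r)}(x_n)\mathbb{A}^{(r-1)}(x_n)\bigr)_{\lambda_n}\, s^{(r)}S^{(r-1)}\right].
$$
Because the $V^{(r)}$-operators (entries of $\tilde L^{(r)}$ and $s^{(r)}$) commute with the $V^{(r-1)}\otimes\cdots\otimes V^{(1)}$-operators (entries of $\mathbb{A}^{(r-1)}$ and $S^{(r-1)}$), the trace factorizes as a trace over $V^{(r)}$ times a trace over the lower spaces. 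Expanding $A^{(r)}_{\lambda_a}(x_a)=\sum_{\mu_a} \tilde L^{(r)}_{\lambda_a,\mu_a}(x_a)\,\mathbb{A}^{(r-1)}_{\mu_a}(x_a)$ and collecting, I get
$$
\Omega_{\lambda^+} f_\lambda(x)
=
\sum_{\mu}
\Tr_{V^{(r)}}\!\left[\tilde L^{(r)}_{\lambda_1,\mu_1}(x_1)\cdots \tilde L^{(r)}_{\lambda_n,\mu_n}(x_n)\, s^{(r)}\right]\;
\Tr\!\left[\mathbb{A}^{(r-1)}_{\mu_1}(x_1)\cdots \mathbb{A}^{(r-1)}_{\mu_n}(x_n)\, S^{(r-1)}\right],
$$
where the first factor is precisely $T_{\lambda,\mu}(x_1,\dots,x_n)$ and the second is $\Omega_{\mu^+}f_\mu(x)$ by the Ansatz at rank $r-1$.

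The next step is to pin down which $\mu$ contribute and what the prefactor is. The entries $\tilde L^{(r)}_{ij}$ are nonzero only when $i\geq j$ or $i=0$, and more importantly a creation operator $a^\dag_i$ on the $V^{(r)}$-boson must be matched by an annihilation $a_i$ somewhere in the same trace for the $V^{(r)}$-trace to be nonzero (and similarly for the nested structure): heuristically, each part of $\lambda$ equal to $r$ forces a "drop" to $r-1$, each nonzero part $<r$ is transported with a possible reshuffle, and parts equal to $0$ stay $0$. This is exactly the statement that the surviving $\mu$ are permutations of $\lambda^*$, i.e. $\mu\in S_n\cdot(\lambda^*)^+$; I would justify the vanishing of all other terms by a weight/conservation argument on the occupation numbers of each boson family, using the explicit form \eqref{eq:Losc1}--\eqref{eq:Losc2} and the diagonal form of $s_i$. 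The scalar prefactor $\prod_{i=1}^{r-1}\bigl(1-q^i\prod_{j\leq i}t^{m_j(\lambda)}\bigr)$ must then emerge from comparing the normalizations $\Omega_{\lambda^+}$ and $\Omega_{\mu^+}$: since $\mu^+=(\lambda^*)^+$, the ratio $\Omega_{\lambda^+}/\Omega_{(\lambda^*)^+}$ should evaluate to this product. I would compute $\Omega_{\lambda^+}=\Tr[\prod_m k_m^{\,\#}\, S]$-type traces explicitly in the Fock representation \eqref{eq:oscillatorsrep} with $s_i=k_i^{(i-1)u}$, where geometric-series sums over the occupation numbers produce factors of the form $1/(1-q^i t^{\sum_{j\leq i}m_j})$, and the difference between $\lambda$ and $\lambda^*$ in these exponents yields the advertised product.

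The main obstacle I anticipate is the bookkeeping in the second step: correctly identifying the constraint that forces $\mu$ to be a permutation of $\lambda^*$ (rather than some larger set), and simultaneously tracking the $t$- and $q$-powers that the diagonal operators $k_m$ and $s_i$ contribute as bosons are transported down through the nested layers — this is where the "colour versus family" shift noted after \eqref{rank3ZFtiles} enters and is easy to get wrong by a sign in an exponent. A clean way to organize this is the graphical/lattice-path picture of Section~\ref{sec:graph}: a term in $f_\lambda$ corresponds to a configuration of boson trajectories on $n$ columns of height $r$, and removing the bottom row of tiles (the action of $\tilde L^{(r)}$ and $s^{(r)}$) sends the configuration to one of height $r-1$ with boundary data $\mu$, with the leftover boundary contributions of the removed row accounting for both $T_{\lambda,\mu}$ and the normalization ratio. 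Once the vanishing and the prefactor are established, \eqref{trans-eq} follows immediately by dividing through by $\Omega_{(\lambda^*)^+}$ and using $\Omega_{\mu^+}=\Omega_{(\lambda^*)^+}$ for every $\mu\in S_n\cdot(\lambda^*)^+$.
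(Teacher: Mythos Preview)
Your proposal is correct and follows essentially the same approach as the paper: factorize the trace using the recursive structure \eqref{A^N}--\eqref{S^N}, argue via boson-number conservation (the paper counts $a_i^\dag$ versus $a_i$ for each $i\geq 2$) that $T_{\lambda,\mu}$ vanishes unless $\mu$ is a permutation of $\lambda^*$, and identify the prefactor as the ratio of normalizations. The paper's only refinement is in computing that ratio: rather than evaluating $\Omega_{\lambda^+}$ directly in the Fock representation, it compares coefficients of the leading monomial $x^\lambda$ on both sides of the recursion to obtain $\Omega^{(r)}_{\lambda^+}/\Omega^{(r-1)}_{(\lambda^*)^+}=T_{\lambda,\lambda^*}(1,\dots,1)$, a single easily-evaluated trace --- and note that the product appearing in \eqref{trans-eq} is the \emph{reciprocal} of $\Omega_{\lambda^+}/\Omega_{(\lambda^*)^+}$, not the ratio itself as you wrote.
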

\begin{proof}
Rewrite (\ref{MPA}) as 
\begin{align*}
\Omega_{\lambda^+}f_{\lambda}(x_1,\dots,x_n) 
= 
\Tr \left[ A_{\lambda_1}(x_1) \cdots A_{\lambda_n}(x_n) S \right],
\label{MPA2}
\end{align*}
for some proportionality factor $\Omega_{\lambda^+} = \Omega^{(r)}_{\lambda^+}$ that
remains to be determined (and in particular needs to be shown different from zero). We remark that the factor $\Omega^{(r)}_{\lambda^+}$ is the same for all partitions $\lambda$ with the common re-ordering $\lambda^{+}$. Combining (\ref{MPA2}) with equations \eqref{A^N} and \eqref{S^N}, we obtain
\begin{multline}
\Omega^{(r)}_{\lambda^+}f_{\lambda}(x_1,\dots,x_n)
=
\\ 
\sum_{\mu}
\Tr \left[ 
\tilde{L}^{(r)}_{\lambda_1,\mu_1}(x_1) 
\cdots 
\tilde{L}^{(r)}_{\lambda_n,\mu_n}(x_n) 
s^{(r)}
\right]
\Tr \left[ 
A^{(r-1)}_{\mu_1}(x_1) \cdots A^{(r-1)}_{\mu_n}(x_n) S^{(r-1)}
\right],
\end{multline}
which can be rewritten as
\begin{align}
\label{trans-eq1}
\Omega^{(r)}_{\lambda^+}
f_{\lambda}(x_1,\dots,x_n)
=
\sum_{\mu}
T_{\lambda,\mu}(x_1,\dots,x_n) 
\Omega^{(r-1)}_{\mu^+} 
f_{\mu}(x_1,\dots,x_n).
\end{align}
At this stage the sum in \eqref{trans-eq1} is taken over all compositions $\mu$, with parts in $[0,1,\dots,r-1]$. In order to reduce the size of this sum, we need the following result.
\begin{prop}
For any two compositions $\lambda$ and $\mu$, if $\mu^{+} \neq (\lambda^{*})^{+}$ then
$T_{\lambda,\mu}(x_1,\dots,x_n)=0$.
\end{prop}
\begin{proof}
Given a composition $\lambda$, let $m_i(\lambda)$ be the number of its parts 
of size $i$. We wish to show that if $m_{i}(\lambda) \neq m_{i-1}(\mu)$ for some $i \geq 2$, then 
$T_{\lambda,\mu}(x_1,\dots,x_n)=0$. 

For each $i \geq 2$, let $\#_i(\lambda,\mu)$ be the number of pairs $(\lambda_k,\mu_k)=(i,i-1)$. Invoking the explicit form of $\tilde{L}(x)$, we see that the number of $a_i^\dag$ operators appearing in $T_{\lambda,\mu}(x_1,\dots,x_n)$ is equal to $m_i(\lambda)-\#_i(\lambda,\mu)$. Conversely, the number of $a_i$ operators appearing in this trace is given by 
$m_{i-1}(\mu)-\#_i(\lambda,\mu)$. In order for the trace to be non-vanishing, these two numbers must be equal for all $i \geq 2$.
\end{proof}
An immediate consequence of this proposition is that we can rewrite \eref{trans-eq1} as
\begin{align}
\label{trans-eq2}
\Omega^{(r)}_{\lambda^+}
f_{\lambda}(x_1,\dots,x_n)
=
\Omega^{(r-1)}_{(\lambda^*)^+}
\sum_{\mu\in S_n\cdot (\lambda^{*})^{+}}
T_{\lambda,\mu}(x_1,\dots,x_n) 
f_{\mu}(x_1,\dots,x_n) 
\end{align}
To complete the proof of \eqref{trans-eq}, it remains only to demonstrate the following result.
\begin{prop}
The proportionality factors appearing in \eqref{trans-eq2} satisfy
\begin{align}
\label{prop2}
\Omega^{(r)}_{\lambda^+}
=
\Omega^{(r-1)}_{(\lambda^*)^+}
\prod_{i=1}^{r-1} \frac{1}{1- q^{i}  t^{\lambda'_{1}-\lambda'_{i+1}}},
\end{align}
where $\lambda'$ is the conjugate of $\lambda^+$.
\end{prop}
\begin{proof}
Without losing generality, we ease notation by assuming that $\lambda$ is a partition. We compare coefficients of the monomial $x^{\lambda} := x_1^{\lambda_1}\cdots x_n^{\lambda_n}$ on both sides of \eref{trans-eq2}. Since the matrix elements $T_{\lambda,\mu}(x_1,\dots,x_n)$ are at most degree 1 in each $x_i$, the only contribution to $x^\lambda$ in the right hand side comes from $\mu=\lambda^{*}$. Therefore we can write
\begin{align}
\label{propor-trans}
\Omega^{(r)}_{\lambda}
x^{\lambda} 
=
\Omega^{(r-1)}_{\lambda^{*}}
T_{\lambda,\lambda^{*}}(x_1,\dots,x_n)
x^{\lambda^{*}}
=
\Omega^{(r-1)}_{\lambda^{*}}
T_{\lambda,\lambda^{*}}(1,\dots,1)
x^{\lambda},
\end{align}
or more simply, $\Omega^{(r)}_{\lambda} = \Omega^{(r-1)}_{\lambda^{*}} 
T_{\lambda,\lambda^{*}}(1,\dots,1)$. A simple, explicit computation gives
\begin{align*}
T_{\lambda,\lambda^{*}}(1,\dots,1)
= 
\prod_{i=1}^{r-1} \Tr\left[ k^{m_1(\lambda)} \cdots k^{m_i(\lambda)} k^{i u} \right] 
= 
\prod_{i=1}^{r-1} \frac{1}{1- t^{m_1(\lambda)} \cdots t^{m_i(\lambda)} q^{i}} .
\end{align*}
\end{proof}
Thanks to this proposition, \eqref{trans-eq2} implies \eqref{trans-eq} provided that 
$\Omega^{(r-1)}_{\lambda^{*}}$ is non-vanishing. This can be deduced inductively on $r$, using \eqref{prop2} with $\Omega^{(0)} = 1$. Indeed, we find that
\begin{align*}
\Omega^{(r)}_{\lambda}
=
\prod_{j=1}^{r}\prod_{i=1}^{r-j} 
\frac{1}{1- q^{i} \prod_{j \leq k < i+j} t^{m_k(\lambda)}}
=
\prod_{1 \leq i<j \leq r}
\frac{1}{1- q^{j-i} t^{\lambda'_{i}-\lambda'_{j}}}.
\end{align*}
\end{proof}
A simple case of \eqref{trans-eq} is when all parts of the composition $\lambda$ are non-zero, $\lambda_i > 0$ for all $1 \leq i \leq n$. In this case the summation on the right hand side is trivial, and we recover 
\begin{align*}
f_{\lambda}(x_1,\dots,x_n) 
= 
\left(
\prod_{i=1}^{n} 
x_i
\right)
f_{\lambda^{*}}(x_1,\dots,x_n).
\end{align*}

\section{Combinatorial interpretation of matrix product formula}

\subsection{Lattice formulation of matrix product}

In view of the graphical representation of the elements of the $\tilde{L}$-matrices, it is possible to interpret the matrix product formula \eqref{MPA} entirely in terms of lattice paths. This leads to a combinatorial rule for $f_{\lambda_1,\dots,\lambda_n}(x_1,\dots,x_n)$, for any composition $\lambda$.

We consider a lattice formed by $n$ rows of tiles, with associated spectral parameters $x_1,\dots,x_n$, where the index of the spectral parameters increases as we go from the top row to the bottom. Each row is $r$ tiles wide, where $r$ is the size of the largest part in $\lambda$. For each 
$\lambda_i > 0$, there is a boson of colour $i$ incident on the left boundary of the $i^{\rm th}$ row. However, the rows $i$ for which $\lambda_i = 0$ have no particle at the left boundary. The right boundary of the lattice is completely unoccupied by particles. The lattice thus constructed reproduces all terms in the matrix product \eqref{MPA}, since each row corresponds with one of the components $A_i(x)$ of $\mathbb{A}^{(r)}(x)$. For example, for $r=3$ and $\lambda =(0,2,3,1,0,2)$, the matrix product can be represented in the following way:
\begin{align*}
{\rm Tr}(A_0(x_1) A_2(x_2) A_3(x_3) A_1(x_4) A_0(x_5) A_2(x_6) S)
=
\begin{tikzpicture}[scale=0.9,baseline=(bas.base)]
\node (bas) at (0,3.5) {};
\foreach\x in {3,...,1}{
\foreach\y in {6,...,1}{
\blank{\x}{\y};
}
\node at (4.5-\x,0.5) {\color{red} \tiny (\x) };
}
\foreach\y in {6,...,1}{
\node at (0,-\y+7.5) {$x_{\y}$};
}
\boss{1}{1.5}{0.1};
\bos{1}{3.5}{0.1};
\bosss{1}{4.5}{0.1};
\boss{1}{5.5}{0.1};
\bos{1.5}{7}{0.1};
\boss{2.4}{7}{0.1};
\boss{2.6}{7}{0.1};
\bosss{3.5}{7}{0.1};
\end{tikzpicture}
\end{align*}

Some care needs to be taken in regard to the boundary conditions on the top and bottom edges of the lattice. Since we are taking a trace in \eqref{MPA}, the correct way to view the lattice is with the top and bottom edges identified, so that lattice paths are permitted to ``wrap around'' the cylinder thus formed. This ensures that in each column of the lattice, there is conservation between the number of particles entering/leaving that column (in other words, all creation/annihilation operators come in pairs, ensuring a non-vanishing trace). The sole exception to this rule involves the boson species whose representation has been trivialized. Recall that in column $j$ of the lattice, bosons from any of the families 
$\{1,\dots,j\}$ are allowed, but we trivialize the representation of the first family. Hence at the top of each column we are permitted to eject type 1 particles without causing the trace to vanish. In the example above, we eject 1 trivial boson from column 3, 2 trivial bosons from column 2, and 1 trivial boson from column 1, since these are the corresponding multiplicities in the composition $\lambda$.

Having set up the boundary conditions of the lattice in this way, we obtain all possible terms in the matrix product \eqref{MPA} by summing over all lattice paths which evolve from the left edge to the top boundary (with round-the-cylinder contributions being allowed). Note that each configuration gives rise to an explicit product of operators, whose trace must then be calculated, rather than local Boltzmann weights, as would normally be the case in combinatorial calculations of this nature.

Let us remark finally that equation \eqref{trans-eq} has a particularly natural meaning in the above combinatorial framework. If we represent $f_{\lambda}(x_1,\dots,x_n)$ in terms of an $n \times r$ lattice as described, then \eqref{trans-eq} can be interpreted as summing over all possible configurations of column $r$ of the lattice, with the remaining $r-1$ columns giving rise to $f_{\mu}(x_1,\dots,x_n)$. The elements $T_{\lambda,\mu}(x_1,\dots,x_n)$ of the transition matrix can thus be viewed as partition functions of a single column. We illustrate the calculation of $T_{\lambda,\mu}(x_1,\dots,x_n)$ on an explicit example in the Appendix.

\subsection{Rank 2 example}

We give an example of the lattice-path calculation for $f_{\delta}(x_1,\dots,x_n)$, where 
$\delta = (0,0,1,1,2,2)$ is an anti-dominant weight. This is of particular interest, since in this case $f_{\delta} = E_{\delta}(x_1,\dots,x_n;q,t)$. The possible lattice configurations for this example:

\begin{center}
\begin{tabular}{ccc}
\begin{tikzpicture}[scale=0.9]
\foreach\x in {2,...,1}{
\foreach\y in {6,...,1}{
\blank{\x}{\y};
}
\node at (3.5-\x,0.5) {\color{red} \tiny(\x)};
}
\foreach\y in {6,...,1}{
\node at (0,-\y+7.5) {$x_{\y}$};
}
\gblank{1}{4};
\gblank{1}{3};
\gblank{1}{2};
\gblank{1}{1};
\gblank{2}{2};
\gblank{2}{1};
\draw[bosssline] (1,1.5) -- (2.7,1.5) -- (2.7,7);
\draw[bosssline] (1,2.5) -- (2.5,2.5) -- (2.5,7);
\draw[bossline] (1,3.5) -- (1.7,3.5) -- (1.7,7);
\draw[bossline] (1,4.5) -- (1.5,4.5) -- (1.5,7);
\boss{1}{4.5}{0.1};
\boss{1}{3.5}{0.1};
\bosss{1}{2.5}{0.1};
\bosss{1}{1.5}{0.1};
\bosss{2}{2.5}{0.1};
\bosss{2}{1.5}{0.1};
\boss{1.5}{7}{0.1};
\boss{1.7}{7}{0.1};
\bosss{2.5}{7}{0.1};
\bosss{2.7}{7}{0.1};
\end{tikzpicture}
&\quad\quad
\begin{tikzpicture}[scale=0.9]
\foreach\x in {2,...,1}{
\foreach\y in {6,...,1}{
\blank{\x}{\y};
}
\node at (3.5-\x,0.5) {\color{red} \tiny(\x)};
}
\gblank{1}{4};
\gblank{1}{3};
\gblank{1}{2};
\gblank{1}{1};
\gblank{2}{5};
\gblank{2}{1};
\draw[bosssline] (1,1.5) -- (2.7,1.5) -- (2.7,7);
\draw[bosssline] (1,2.5) -- (1.3,2.5) -- (1.3,5.5) -- (2.5,5.5) -- (2.5,7);
\draw[bossline] (1,3.5) -- (1.7,3.5) -- (1.7,7);
\draw[bossline] (1,4.5) -- (1.5,4.5) -- (1.5,7);
\boss{1}{4.5}{0.1};
\boss{1}{3.5}{0.1};
\bosss{1}{2.5}{0.1};
\bosss{1}{1.5}{0.1};
\bosss{2}{5.5}{0.1};
\bosss{2}{1.5}{0.1};
\boss{1.5}{7}{0.1};
\boss{1.7}{7}{0.1};
\bosss{2.5}{7}{0.1};
\bosss{2.7}{7}{0.1};
\end{tikzpicture}
&\quad\quad
\begin{tikzpicture}[scale=0.9]
\foreach\x in {2,...,1}{
\foreach\y in {6,...,1}{
\blank{\x}{\y};
}
\node at (3.5-\x,0.5) {\color{red} \tiny(\x)};
}
\gblank{1}{4};
\gblank{1}{3};
\gblank{1}{2};
\gblank{1}{1};
\gblank{2}{6};
\gblank{2}{1};
\draw[bosssline] (1,1.5) -- (2.7,1.5) -- (2.7,7);
\draw[bosssline] (1,2.5) -- (1.3,2.5) -- (1.3,6.5) -- (2.5,6.5) -- (2.5,7);
\draw[bossline] (1,3.5) -- (1.7,3.5) -- (1.7,7);
\draw[bossline] (1,4.5) -- (1.5,4.5) -- (1.5,7);
\boss{1}{4.5}{0.1};
\boss{1}{3.5}{0.1};
\bosss{1}{2.5}{0.1};
\bosss{1}{1.5}{0.1};
\bosss{2}{6.5}{0.1};
\bosss{2}{1.5}{0.1};
\boss{1.5}{7}{0.1};
\boss{1.7}{7}{0.1};
\bosss{2.5}{7}{0.1};
\bosss{2.7}{7}{0.1};
\end{tikzpicture}
\\
\begin{tikzpicture}[scale=0.9]
\foreach\x in {2,...,1}{
\foreach\y in {6,...,1}{
\blank{\x}{\y};
}
\node at (3.5-\x,0.5) {\color{red} \tiny(\x)};
}
\foreach\y in {6,...,1}{
\node at (0,-\y+7.5) {$x_{\y}$};
}
\gblank{1}{4};
\gblank{1}{3};
\gblank{1}{2};
\gblank{1}{1};
\gblank{2}{5};
\gblank{2}{2};
\draw[bosssline] (1,1.5) -- (1.5,1.5) -- (1.5,2.5) -- (2.7,2.5) -- (2.7,7);
\draw[bosssline] (1,2.5) -- (1.3,2.5) -- (1.3,5.5) -- (2.5,5.5) -- (2.5,7);
\draw[bossline] (1,3.5) -- (1.7,3.5) -- (1.7,7);
\draw[bossline] (1,4.5) -- (1.5,4.5) -- (1.5,7);
\boss{1}{4.5}{0.1};
\boss{1}{3.5}{0.1};
\bosss{1}{2.5}{0.1};
\bosss{1}{1.5}{0.1};
\bosss{2}{5.5}{0.1};
\bosss{2}{2.5}{0.1};
\boss{1.5}{7}{0.1};
\boss{1.7}{7}{0.1};
\bosss{2.5}{7}{0.1};
\bosss{2.7}{7}{0.1};
\end{tikzpicture}
&\quad\quad
\begin{tikzpicture}[scale=0.9]
\foreach\x in {2,...,1}{
\foreach\y in {6,...,1}{
\blank{\x}{\y};
}
\node at (3.5-\x,0.5) {\color{red} \tiny(\x)};
}
\gblank{1}{4};
\gblank{1}{3};
\gblank{1}{2};
\gblank{1}{1};
\gblank{2}{6};
\gblank{2}{2};
\draw[bosssline] (1,1.5) -- (1.5,1.5) -- (1.5,2.5) -- (2.7,2.5) -- (2.7,7);
\draw[bosssline] (1,2.5) -- (1.3,2.5) -- (1.3,6.5) -- (2.5,6.5) -- (2.5,7);
\draw[bossline] (1,3.5) -- (1.7,3.5) -- (1.7,7);
\draw[bossline] (1,4.5) -- (1.5,4.5) -- (1.5,7);
\boss{1}{4.5}{0.1};
\boss{1}{3.5}{0.1};
\bosss{1}{2.5}{0.1};
\bosss{1}{1.5}{0.1};
\bosss{2}{6.5}{0.1};
\bosss{2}{2.5}{0.1};
\boss{1.5}{7}{0.1};
\boss{1.7}{7}{0.1};
\bosss{2.5}{7}{0.1};
\bosss{2.7}{7}{0.1};
\end{tikzpicture}
&\quad\quad
\begin{tikzpicture}[scale=0.9]
\foreach\x in {2,...,1}{
\foreach\y in {6,...,1}{
\blank{\x}{\y};
}
\node at (3.5-\x,0.5) {\color{red} \tiny(\x)};
}
\gblank{1}{4};
\gblank{1}{3};
\gblank{1}{2};
\gblank{1}{1};
\gblank{2}{6};
\gblank{2}{5};
\draw[bosssline] (1,1.5) -- (1.4,1.5) -- (1.4,5.5) -- (2.7,5.5) -- (2.7,7);
\draw[bosssline] (1,2.5) -- (1.2,2.5) -- (1.2,6.5) -- (2.5,6.5) -- (2.5,7);
\draw[bossline] (1,3.5) -- (1.8,3.5) -- (1.8,7);
\draw[bossline] (1,4.5) -- (1.6,4.5) -- (1.6,7);
\boss{1}{4.5}{0.1};
\boss{1}{3.5}{0.1};
\bosss{1}{2.5}{0.1};
\bosss{1}{1.5}{0.1};
\bosss{2}{6.5}{0.1};
\bosss{2}{5.5}{0.1};
\boss{1.6}{7}{0.1};
\boss{1.8}{7}{0.1};
\bosss{2.5}{7}{0.1};
\bosss{2.7}{7}{0.1};
\end{tikzpicture}
\end{tabular}
\end{center}

\noindent
where we highlight in green the tiles which give rise to an $x$ weight (these are the tiles which have an occupied left edge). Note that in column 2 of the lattice it is possible for the type 2 bosons to go around the cylinder, but this does not lead to any particle configurations not already shown here. From these lattice configurations we can read off the product of operators which we need to trace over. We obtain
\begin{align*}
\Omega_{\delta^+}
E_{\delta}(x_1,\dots,x_6;q,t)
=
x_3 x_4 x_5^2 x_6^2 \Tr [k^2 S]
+
x_2  x_3 x_4 x_5 x_6^2 \Tr [a k^2 a^\dag S]
+
x_1  x_3 x_4 x_5 x_6^2 \Tr [a k^2 a^\dag S]
\\
+
x_2 x_3 x_4 x_5^2 x_6 \Tr [a k^2 a^\dag S]
+
x_1 x_3 x_4 x_5^2 x_6 \Tr [a k^2 a^\dag S]
+
x_1 x_2 x_3 x_4 x_5 x_6 \Tr [a a k^2 a^\dag a^\dag S],
\end{align*}
in agreement with our earlier calculation \eqref{eq:Edtrace}.

\section{Symmetric Macdonald polynomials}

We relate our results to symmetric Macdonald polynomials. The following result is already mentioned in \cite{macd95},
\begin{lemma}
\label{le:symmetry}
Let $\lambda$ be a dominant composition, i.e. a partition. Then the sum 
\be
\mathcal{P}_\lambda(x_1,\ldots,x_n;q,t) = \sum_{\mu} f_\mu(x_1,\ldots,x_n;q,t),
\ee 
where the sum runs through all permutations $\mu$ of $\lambda^+$, is symmetric.
\end{lemma}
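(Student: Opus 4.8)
The plan is to verify directly that $\mathcal{P}_\lambda$ is annihilated by every $(1-s_i)$, $i=1,\dots,n-1$. By the explicit form \eqref{eq:Hecke} of the Demazure operator, a polynomial $g$ satisfies $T_i g = t^{1/2} g$ if and only if $(1-s_i)g=0$ (since $\tfrac{tx_i-x_{i+1}}{x_i-x_{i+1}}$ is a non-zero rational function); hence it suffices to show $T_i\mathcal{P}_\lambda = t^{1/2}\mathcal{P}_\lambda$ for all $i$. Because the index set of the sum --- the set of all rearrangements $\mu$ of the partition $\lambda^+$ --- is visibly stable under each $s_i$, this reduces to controlling $T_i f_\mu$ term by term and resumming.

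First I would record $T_i f_\mu$ in the three cases for the pair $(\mu_i,\mu_{i+1})$. If $\mu_i=\mu_{i+1}$, this is \eqref{qKZ1}: $T_i f_\mu = t^{1/2} f_\mu$. If $\mu_i>\mu_{i+1}$, this is \eqref{qKZ2}: $T_i f_\mu = t^{-1/2} f_{s_i\mu}$. The case $\mu_i<\mu_{i+1}$ is not among the stated relations and is the only place a short computation is needed: putting $\nu := s_i\mu$ (so $\nu_i>\nu_{i+1}$), the defining recursion for $f$ gives $f_\nu = t^{-1/2}T_i^{-1}f_\mu$, i.e. $f_\mu = t^{1/2}T_i f_\nu$; applying $T_i$ once more, using the Hecke quadratic relation $T_i^2 = (t^{1/2}-t^{-1/2})T_i + 1$ together with \eqref{qKZ2} for $\nu$ (which gives $T_i f_\nu = t^{-1/2} f_\mu$), one obtains
\[
T_i f_\mu = (t^{1/2}-t^{-1/2})\,f_\mu + t^{1/2}\,f_{s_i\mu}, \qquad \mu_i<\mu_{i+1}.
\]

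Next I would organise the sum defining $T_i\mathcal{P}_\lambda$ according to the $s_i$-orbits on the rearrangements of $\lambda^+$: the fixed points are exactly those $\mu$ with $\mu_i=\mu_{i+1}$, for which $T_i f_\mu = t^{1/2} f_\mu$, and the remaining rearrangements fall into two-element orbits $\{\mu,s_i\mu\}$, which we may label so that $\mu_i>\mu_{i+1}$. For such an orbit, adding \eqref{qKZ2} for $\mu$ to the $\mu_i<\mu_{i+1}$ formula applied to $s_i\mu$ gives
\[
T_i f_\mu + T_i f_{s_i\mu} = t^{-1/2} f_{s_i\mu} + (t^{1/2}-t^{-1/2}) f_{s_i\mu} + t^{1/2} f_\mu = t^{1/2}\bigl(f_\mu + f_{s_i\mu}\bigr).
\]
Summing over all fixed points and orbits yields $T_i\mathcal{P}_\lambda = t^{1/2}\mathcal{P}_\lambda$ for every $i$, so $\mathcal{P}_\lambda$ is symmetric.

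The only non-routine step is the derivation of the $\mu_i<\mu_{i+1}$ identity for $T_i f_\mu$ from the recursion defining $f$ and the Hecke relation; the rest is bookkeeping with the $s_i$-orbit decomposition of the index set. I would also note, for safety, that the recursion defining the $f_\mu$ and the relations \eqref{qKZ1}--\eqref{qKZ3} hold for arbitrary compositions, so no issue of reduced words or of $\mu,s_i\mu$ failing to be dominant arises in the computation above.
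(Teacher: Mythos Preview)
Your proof is correct and follows essentially the same route as the paper: both show $T_i\mathcal{P}_\lambda = t^{1/2}\mathcal{P}_\lambda$ for every $i$ by splitting the sum according to whether $\mu_i<\mu_{i+1}$, $\mu_i=\mu_{i+1}$, or $\mu_i>\mu_{i+1}$, and both obtain the needed $\mu_i<\mu_{i+1}$ identity by applying the Hecke quadratic relation to $f_\mu = t^{1/2}T_i f_{s_i\mu}$ together with \eqref{qKZ2}. Your explicit $s_i$-orbit bookkeeping is slightly more transparent than the paper's three-line summation, but the content is identical.
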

\begin{proof}
We need to show that $T_i \mathcal{P}_\lambda = t^{1/2} \mathcal{P}_\lambda$ for all $i=1,\ldots,n-1$. From \eref{qKZ2} and \eref{eq:Hecke} we find for $\lambda_i < \lambda_{i+1}$ that
\begin{align}
t^{1/2} T_i f_{\ldots,\lambda_i,\lambda_{i+1},\ldots} &= tT_i^2  f_{\ldots,\lambda_{i+1},\lambda_{i},\ldots} = t \left(1+(t^{1/2}-t^{-1/2})T_i \right)  f_{\ldots,\lambda_{i+1},\lambda_{i},\ldots} \nonumber\\
&= t f_{\ldots,\lambda_{i+1},\lambda_{i},\ldots} +(t-1)  f_{\ldots,\lambda_{i},\lambda_{i+1},\ldots}.
\end{align}
Combining this with  \eref{qKZ1} and \eref{qKZ2} we thus find
\begin{align}
t^{1/2} T_i \sum_\mu f_\mu &= \sum_{\mu:\ \mu_i < \mu_{i+1}} \left( tf_{s_i\mu} +(t-1)f_\mu \right) + \sum_{\mu:\ \mu_i = \mu_{i+1}} t f_\mu + \sum_{\mu:\ \mu_i > \mu_{i+1}} f_{s_i\mu} \nonumber\\
&= \sum_{\mu:\ \mu_i < \mu_{i+1}} tf_{s_i\mu} + \sum_{\mu:\ \mu_i \le \mu_{i+1}} tf_\mu = t \sum_\mu f_\mu.
\end{align}
\end{proof}
The Macdonald polynomial $P_{\lambda}$ is the unique symmetric polynomial (up to normalisation) which can be obtained by taking linear combinations of the non-symmetric Macdonald polynomials $E_{\mu}$, where $\mu$ is a permutation of $\lambda$. By \eqref{eq:triang}, each $f_{\mu}$ can be written as a linear combination of non-symmetric Macdonald polynomials $E_{\nu}$, where $\nu$ is a permutation of $\mu$ and $\nu \le \mu$. It follows that $\mathcal{P}_{\lambda}$ is a linear combination of $E_{\mu}$ such that $\mu$ is a permutation of $\lambda$, and since $\mathcal{P}_{\lambda}$ is symmetric, $\mathcal{P}_{\lambda}$ and $P_{\lambda}$ must be equal up to normalisation, due to uniqueness. By this argument and Lemma~\ref{le:symmetry}, we have the following theorem,

\begin{thm} 
\label{th:symmMacd}
For $\lambda \subset r^n$ 
\be
\Omega_{\lambda}  P_\lambda(x_1,\ldots,x_n;q,t) = \sum_{\mu\in S_n\cdot \lambda} \Tr S \prod_{i=1}^n   A_{\mu_i} (x_i),
\ee
where $\Omega_{\lambda} =\Omega_{\lambda}^{(r)} $, $S=S^{(r)}$ , $A_\mu=A_\mu^{(r)}$ and the sum is over all permutations $\mu$ of $\lambda$.
\end{thm}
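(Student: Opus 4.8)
The plan is to combine the ingredients already established. From the matrix product Ansatz \eqref{MPA}, for every composition $\mu$ one has $\Omega_{\mu^+} f_{\mu}(x_1,\dots,x_n) = \Tr[A_{\mu_1}(x_1)\cdots A_{\mu_n}(x_n)\,S]$, and the normalisation $\Omega_{\mu^+}$ depends only on the partition $\mu^+$. Hence, when $\mu$ ranges over the orbit $S_n\cdot\lambda$ of a partition $\lambda$ with $\lambda\subset r^n$, every one of these normalisations equals $\Omega_{\lambda}=\Omega^{(r)}_{\lambda}$ and the relevant operators are the rank-$r$ operators $A_i=A_i^{(r)}$, $S=S^{(r)}$ (this is where the hypothesis $\lambda\subset r^n$ is used: $r$ is at least the largest part of $\lambda$, so each such $f_\mu$ is genuinely represented by the rank-$r$ Ansatz as in the previous sections). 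Summing over the orbit gives
\[
\Omega_{\lambda}\sum_{\mu\in S_n\cdot\lambda} f_{\mu}(x_1,\dots,x_n)
=
\sum_{\mu\in S_n\cdot\lambda}\Tr\,S\prod_{i=1}^{n}A_{\mu_i}(x_i),
\]
so that it only remains to identify $\mathcal{P}_{\lambda}:=\sum_{\mu\in S_n\cdot\lambda}f_{\mu}$, the quantity appearing in Lemma~\ref{le:symmetry}, with the symmetric Macdonald polynomial $P_{\lambda}$.

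By Lemma~\ref{le:symmetry}, $\mathcal{P}_{\lambda}$ is symmetric. On the other hand, the triangular change of basis \eqref{eq:triang} writes each $f_{\mu}$ as a linear combination of the $E_{\nu}$ with $\nu$ a permutation of $\mu$, so $\mathcal{P}_{\lambda}$ lies in the span of $\{E_{\mu}:\mu\in S_n\cdot\lambda\}$. Since $P_{\lambda}$ is, up to a scalar, the unique symmetric polynomial obtainable as a linear combination of the $E_{\mu}$ with $\mu$ a permutation of $\lambda$, we conclude that $\mathcal{P}_{\lambda}=c\,P_{\lambda}$ for some scalar $c=c(q,t)$.

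It then remains to prove $c=1$, which I would do by comparing the coefficient of the dominant monomial $x^{\lambda^+}=x_1^{\lambda_1}\cdots x_n^{\lambda_n}$ on both sides. In the standard monic normalisation this coefficient equals $1$ for $P_{\lambda}$. For $\mathcal{P}_{\lambda}$ one uses that $x^{\lambda^+}$ is maximal, with respect to the order governing the expansion of the non-symmetric Macdonald polynomials into monomials, among all monomials occurring in any $E_{\nu}$ with $\nu\in S_n\cdot\lambda$; therefore $x^{\lambda^+}$ occurs only in $E_{\lambda^+}$, and there with coefficient $1$, and by the triangularity \eqref{eq:triang} only $f_{\lambda^+}$ contributes to it, with coefficient $d_{\lambda^+,\lambda^+}$. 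Finally $d_{\lambda^+,\lambda^+}=1$: the recursion $f_{\ldots,\mu_i,\mu_{i+1},\ldots}=t^{-1/2}T_i^{-1}f_{\ldots,\mu_{i+1},\mu_i,\ldots}$ (for $\mu_i>\mu_{i+1}$) defining the $f_{\mu}$ starts from the monic polynomial $f_{\delta}=E_{\delta}$, and a short computation with the explicit form \eqref{eq:Hecke} of $T_i^{-1}$ shows that each step preserves the coefficient $1$ of the leading monomial. Hence $c=1$ and Theorem~\ref{th:symmMacd} follows.

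The main obstacle is not a deep one: the statement is essentially a corollary of the matrix product formula for the $f_{\mu}$ together with Lemma~\ref{le:symmetry} and the uniqueness of $P_{\lambda}$. The one point that genuinely needs care is the normalisation, that is, verifying that the proportionality constant $c$ between $\mathcal{P}_{\lambda}$ and $P_{\lambda}$ equals exactly $1$; this is precisely what justifies the explicit factor $\Omega_{\lambda}$ appearing in the statement. Everything else is bookkeeping, keeping in mind the colour/family shift discussed after \eqref{rank3ZFtiles} when one wishes to make the traces explicit.
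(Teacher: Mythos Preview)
Your overall strategy coincides with the paper's: sum the matrix product formula \eqref{MPA} over the $S_n$-orbit of $\lambda$, invoke Lemma~\ref{le:symmetry} to get symmetry, use the triangularity \eqref{eq:triang} to land in $\Span\{E_\mu:\mu\in S_n\cdot\lambda\}$, and then appeal to the uniqueness of $P_\lambda$ among symmetric elements of that span. The paper stops there, asserting proportionality and stating the theorem without explicitly pinning down the constant; you go further and argue that $c=1$ by comparing coefficients of $x^{\lambda}$. This extra step is a genuine addition and is the right thing to check.

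There is, however, a small slip in how you carry it out. In the normalisation used in this paper the polynomials $E_\mu$ produced by the recursion \eqref{eq:HonE} are \emph{not} monic: a direct computation with \eqref{eq:Hecke} shows that each application of $T_i(u)$ multiplies the leading coefficient by $t^{1/2}$, so the coefficient of $x^{\lambda^+}$ in $E_{\lambda^+}$ is $t^{\ell/2}$ (with $\ell$ the length of a reduced word from $\delta$ to $\lambda^+$), and correspondingly $d_{\lambda^+,\lambda^+}=t^{-\ell/2}$, not $1$. What \emph{is} true, and what your ``short computation'' with $t^{-1/2}T_i^{-1}$ actually shows, is that the coefficient of $x^\mu$ in $f_\mu$ equals $1$ for every $\mu$ (the two powers of $t^{1/2}$ cancel). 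This is exactly the fact implicitly used in the paper when deriving \eqref{propor-trans}. So your conclusion $c=1$ is correct, but the route through ``$E_{\lambda^+}$ has coefficient $1$'' and ``$d_{\lambda^+,\lambda^+}=1$'' should be replaced by the direct statement that the recursion for the $f_\mu$ preserves the leading coefficient, together with the observation that for $\mu\in S_n\cdot\lambda$ with $\mu\neq\lambda^+$ the monomial $x^{\lambda^+}$ does not occur in $f_\mu$ (since $\lambda^+$ strictly dominates every other $\mu$ in its orbit).
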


A corollary of Theorem~\ref{th:symmMacd} is that the generating function for symmetric Macdonald polynomials for partitions $\lambda\subset r^n$ can be written as a single matrix product. Define first
\begin{equation}
A(y_0,\ldots,y_r;x) = \sum_{i=0}^r y_i A_{i} (x).
\end{equation}
Notice that $[A(y_0,\ldots,y_r;x),A(y_0,\ldots,y_r;x')]=0$ by a similar argument as in Lemma~\ref{le:symmetry}. Moreover we have that
\begin{equation}
\Tr [S\prod_{i=1}^n A(y_0,\ldots,y_r;x_i)] = \sum_{\lambda \subset r^n} \Omega_{\lambda} \prod_{i=0}^r y_i^{m_i(\lambda)} P_\lambda(x_1,\ldots,x_n;q,t), 
\end{equation}
where again all superscripts $(r)$ are suppressed and $m_i(\lambda)$ is the multiplicity of part $i$ in $\lambda$.

Special values of $q$ include the Hall--Littlewood polynomials ($q=0$), which have been expressed as Bethe wave-functions in the $t$-boson model \cite{Tsilevich}. The structure of the expression in \cite{Tsilevich} has some features in common with our matrix product formula, but is distinguished by some important differences. In particular, higher rank $L$-matrices are not needed in the Hall--Littlewood formula, and the construction does not involve a trace operation. Other special values of $q$ include Schur functions ($q=t$), Jack polynomials ($q=t^{\alpha}, t\rightarrow 1$) and $q=1$ results in the normalisation of the inhomogeneous multi-species asymmetric exclusion process with periodic boundary conditions.  The homogeneous limit of the latter is recovered by sending $x_i\rightarrow 1$.

\section{Calculation of traces}
\label{se:traces}

So far we have explained how to calculate the matrix product \eqref{MPA} in terms of traces of 
$t$-oscillators, without explicitly evaluating these. We now turn to the evaluation of the traces, using the representation \eqref{eq:oscillatorsrep} of the $t$-oscillator algebras. Let us start with the examples encountered in equation \eref{eq:Edtrace}:
\be
\begin{split}
\Tr [k^{p}]
&=
\sum_{m=0}^{\infty}
(t^{p})^m
=
\frac{1}{1-t^{p}} = \frac{1}{(1-t)[p]}. 
\\
\frac{
\Tr [a a^\dag k^{p}]
}
{
\Tr [k^p]
}
&=
\frac{
\sum_{m=0}^{\infty}
(t^{p})^{m}
(1-t^{m+1})
}
{
\Tr [k^p]
}
 = \frac{1}{[p+1]},
\\
\frac{
\Tr [a a a^\dag a^\dag k^{p}]
}
{
\Tr [k^p]
}
&=
\frac{
\sum_{m=0}^{\infty}
(t^{p})^m
(1-t^{m+1})
(1-t^{m+2})
}
{
\Tr [k^p]
}
= \frac{[2]}{[p+1][p+2]}.
\end{split}
\ee
In theory, the trace of any product of $t$-oscillators can be evaluated by simply summing up geometric series. On the other hand, as the examples become more complicated, the answer is no longer neatly factorized. In order to evaluate the most general case, let $\mathbb{D}_{\ell}$ be the set of all Dyck paths of length $2\ell$. It is easy to verify that all types of traces arising in this work have the generic form 
$
\Tr [{\rm D}_{\ell} k^{p}]
$
where $p$ is an arbitrary exponent, and ${\rm D}_{\ell}$ represents a Dyck path of length $2\ell$ generated by the oscillators ($a$ means an up step, $a^\dag$ means a down step). Indeed, the trace of any string of $t$-oscillators can be brought into this form (up to overall factors of $t$) using the cyclicity of the trace and the commutation relations \eqref{eq:oscillators}. Let us denote ${\rm D}_{\ell}$ by a series of open/closed parentheses, for example:
\begin{align*}
\Tr [()(())k^p]
:=
\Tr [a a^\dag a a a^\dag a^\dag k^p].
\end{align*}
We consider the following map:
\be
\begin{split}
\mathcal{M}: \quad & \mathbb{D}_{\ell} \rightarrow \mathbb{N}^{\ell}
\\
& {\rm D}_{\ell} \mapsto (m_1,\dots,m_{\ell})
\end{split}
\ee 
where $m_i$ is the number of parenthetic pairs $(\dots)$ in ${\rm D}_{\ell}$ which are surrounded by $i-1$ other parenthetic pairs. For example in the case (()(())) we have $m_1=1$, $m_2=2$ and $m_3=1$. Using this map we deduce the following lemma.
\begin{lemma}
\be
\Tr [{\rm D}_{\ell} k^{p}] = \sum_{n=0}^{\infty}
t^{pn}
(1-t^{n+1})^{m_1}
\dots
(1-t^{n+\ell})^{m_{\ell}},
\ee
where $(m_1,\dots,m_{\ell}) = \mathcal{M}({\rm D}_{\ell})$.
\end{lemma}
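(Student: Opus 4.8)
The plan is to compute $\Tr[{\rm D}_\ell k^p]$ directly in the Fock representation \eqref{eq:oscillatorsrep}. Since $k^p$ acts diagonally, $k^p\ket{n}=t^{pn}\ket{n}$, the trace collapses to $\Tr[{\rm D}_\ell k^p]=\sum_{n\ge 0}t^{pn}\,\langle n|{\rm D}_\ell|n\rangle$, and it remains to establish the identity $\langle n|{\rm D}_\ell|n\rangle=\prod_{i=1}^{\ell}(1-t^{n+i})^{m_i}$, where $(m_1,\dots,m_\ell)=\mathcal M({\rm D}_\ell)$. This I would prove by tracking how the string of oscillators acts on $\ket{n}$.

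To evaluate $\langle n|{\rm D}_\ell|n\rangle$ I would write the word as ${\rm D}_\ell=\hat w_1\cdots \hat w_{2\ell}$ with ``$($''$\mapsto a$ and ``$)$''$\mapsto a^\dagger$, and let $H_j$ be the usual Dyck height after $j$ steps, so $H_0=H_{2\ell}=0$ and $H_j\ge 0$. Letting the oscillators act on $\ket{n}$ from right to left, a downward induction on $j$ shows that applying the last $2\ell-j$ of them yields a scalar multiple of $\ket{n+H_j}$: applying an ``$($''$=a$ at position $j$, where $H_j=H_{j-1}+1$, contributes a factor $(1-t^{n+H_j})^{1/2}$ and lowers the index to $n+H_{j-1}$, while applying a ``$)$''$=a^\dagger$, where $H_{j-1}=H_j+1$, contributes $(1-t^{n+H_{j-1}})^{1/2}$ and raises it to $n+H_{j-1}$. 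In both cases the factor equals $(1-t^{n+\max(H_{j-1},H_j)})^{1/2}$, and because the Dyck condition forces $\max(H_{j-1},H_j)\ge 1$ at every step, no oscillator ever annihilates the running state (in particular $a$ never hits $\ket{0}$), so no spurious cancellation occurs. Since $H_0=0$, pairing $\bra{n}$ with the final state gives $\langle n|{\rm D}_\ell|n\rangle=\prod_{j=1}^{2\ell}(1-t^{n+\max(H_{j-1},H_j)})^{1/2}$.

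The remaining step is combinatorial: regroup these $2\ell$ half-powers according to matched parenthetic pairs. A pair nested inside exactly $i-1$ other pairs accounts for precisely two of the steps, namely its opening and its closing bracket, and for each of these $\max(H_{j-1},H_j)=i$: immediately before an opening bracket at nesting depth $i-1$ the only currently open pairs are the $i-1$ enclosing ones, so the height is $i-1$ and rises to $i$; immediately before the matching closing bracket the height has returned to $i$ and then drops to $i-1$. Hence each such pair supplies a factor $(1-t^{n+i})^{1/2}\cdot(1-t^{n+i})^{1/2}=(1-t^{n+i})$, and collecting over all pairs (of which there are $m_i$ at each depth $i-1$) yields $\prod_{i=1}^{\ell}(1-t^{n+i})^{m_i}$, which completes the argument.

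The main obstacle is the bookkeeping in the last paragraph: one must justify cleanly that at any position of a balanced word the set of currently open pairs is exactly the chain of pairs enclosing that position, so that the height just before a bracket equals its nesting depth. Everything else — the Fock-space action, the half-powers combining into integer powers, and the harmless edge case $n=0$ that the Dyck condition rules out — is routine.
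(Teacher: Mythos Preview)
Your argument is correct in all details: the reduction to $\sum_n t^{pn}\langle n|{\rm D}_\ell|n\rangle$, the right-to-left induction showing the running state is $\ket{n+H_j}$ with step factor $(1-t^{n+\max(H_{j-1},H_j)})^{1/2}$, and the identification of $\max(H_{j-1},H_j)$ with one more than the nesting depth of the bracket at position $j$ all go through exactly as you describe. The paper itself does not supply a proof of this lemma (it simply writes ``Using this map we deduce the following lemma''), so there is no alternative approach to compare against; your direct Fock-space computation is precisely the natural argument the authors are implicitly invoking.
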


It is now useful to introduce the following operators on functions:
\begin{align}
\delta_{t}
[f(z)]
=
f(z)-f(tz),
\quad\quad
\Delta_{t}^{(m)}
=
z
\circ
\underbrace{
\delta_{t}
\circ
\cdots
\circ
\delta_{t}
}_{m},\ \ m\geq 1,
\quad\quad
\Delta_{t}^{(0)}
=
z.
\end{align}
The operator $\Delta_{t}^{(m)}$ acts multiplicatively on monomials and it is easy to see that 
\be
\Delta_{t}^{(m)}(z^n) = (1-t^{n})^m z^{n+1}.
\ee 
It therefore follows that
\begin{multline}
\sum_{n=0}^{\infty}
x^{n}
(1-t^{n+1})^{m_1}
\dots
(1-t^{n+\ell})^{m_{\ell}}\\
=
\left[
\Delta_{t}^{(m_{\ell})}
\cdots
\Delta_{t}^{(m_1)}
\sum_{n=0}^{\infty}
x^n z^{n+1}
\right]_{z=1}
=
\left[
\Delta_{t}^{(m_{\ell})}
\cdots
\Delta_{t}^{(m_1)}
\frac{z}{1-xz}
\right]_{z=1}.
\end{multline}
If we define the function
\begin{align}
\label{eq:psi}
\psi_{[m_1,\dots,m_{\ell}]}
(x)
=
\left[
\Delta_{t}^{(m_{\ell})}
\cdots
\Delta_{t}^{(m_1)}
\frac{z}{1-xz}
\right]_{z=1},
\end{align}
then
\begin{align*}
\Tr [{\rm D}_{\ell} k^p]
=
\psi_{[m_1,\dots,m_{\ell}]}
(t^p),
\quad{\rm where}\ \ 
(m_1,\dots,m_{\ell})
=
\mathcal{M}({\rm D}_{\ell}).
\end{align*}
%
Hence we are able to evaluate all traces that we encounter in terms of the function \eqref{eq:psi}, whose definition is relatively elementary.

\section{Conclusion}
The main result of this paper is a matrix product formula for Macdonald polynomials in terms of deformed bosonic operators. This formula implies a new explicit way to efficiently compute these polynomials and also provides a combinatorial interpretation. Our result firmly connects the polynomial representation theory of the affine Hecke algebra with the theory of solvable lattice models using tools such as the Yang--Baxter equation, $R$-matrices, the Zamolodchikov--Faddeev algebra and the deformed Knizhnik--Zamolodchikov equation. 

The results discussed in this paper have a direct generalisation to the inhomogeneous multi-species asymmetric exclusion process with boundaries \cite{AyyerLS,AyyerLS2,UchiSW,Uchi,CrampeMRV} resulting in a matrix product formula for Koornwinder polynomials \cite{CdGW}. A connection between Koornwinder polynomials and the quantum XXZ chain, which is closely related to the exclusion process by a similarity transformation, was made in \cite{StokmanV}.

We note that a different generalisation of the multi-species asymmetric exclusion process, using inhomogeneous hopping parameters, was discussed in \cite{LamW, AritaM,AyyerL12,AyyerL14}. It would be interesting to clarify the matrix product structure \cite{AritaM} in this case, as well as the relation to integrability \cite{Cantini09}.

\section*{Acknowledgment}
LC has been supported by the CNRS through a Chaire d'Excellence. JdG and MW are generously supported by the Australian Research Council (ARC) and the ARC Centre of Excellence for Mathematical and Statistical Frontiers (ACEMS). We thank Kayed Al Qasemi, Eric Ragoucy, Sergey Sergeev, Ole Warnaar and Paul Zinn-Justin for discussion.

\appendix

\section{A further combinatorial example}

We illustrate the combinatorial meaning of \eqref{trans-eq} on a small rank 3 example, namely $f_{\lambda}(x_1,x_2,x_3,x_4)$ with $\lambda = (3,1,0,2)$. The function $f_{(3,1,0,2)}$ can be expanded into functions $f_\mu$ where $\mu$ is a permuation of $(2,1,0,0)$. So in principle there are 12 terms in this expansion but since the entries (1, 1), (1, 2), and (2, 2) of $\tilde{L}^{(3)}$ vanish, we have only to consider the permutations $(\pi_1,\pi_2,\pi_3,\pi_4)$ of (2, 1, 0, 0) such that $\pi_2 \neq 1, 2$ and  $\pi_4 \neq 2$. Hence we obtain the following four terms in the expansion:
\be
f_{\lambda}=T_{\lambda,(2,0,0,1)}f_{(2,0,0,1)} + T_{\lambda,(0,0,2,1)}f_{(0,0,2,1)} + T_{\lambda,(2,0,1,0)}f_{(2,0,1,0)} + T_{\lambda,(1,0,2,0)}f_{(1,0,2,0)}.
\ee
These terms are represented by the following four pictures:
\begin{center}
\begin{tabular}{ccc}
\begin{tikzpicture}[scale=0.8,baseline=(bas.base)]
\node (bas) at (0,3.5) {};
\foreach\x in {3,...,1}{
\foreach\y in {4,...,1}{
\blank{\x}{\y};
}
}
\foreach\y in {4,...,1}{
\node at (0,-\y+5.5) {$x_{\y}$};
}
\draw[bosssline] (1,4.5) -- (2,4.5);
\draw[bossline] (1,1.5) -- (2,1.5);
\draw[bosline] (1,3.5) -- (1.5,3.5) -- (1.5,5);
\boss{1}{1.5}{0.1};
\bos{1}{3.5}{0.1};
\bosss{1}{4.5}{0.1};
\boss{2}{1.5}{0.1};
\bosss{2}{4.5}{0.1};
\bos{1.5}{5}{0.1};
\boss{2.5}{5}{0.1};
\bosss{3.5}{5}{0.1};
\end{tikzpicture}
&
&
\begin{tikzpicture}[scale=0.8,baseline=(bas.base)]
\node (bas) at (0,3.5) {};
\foreach\x in {3,...,1}{
\foreach\y in {4,...,1}{
\blank{\x}{\y};
}
}
\draw[bosssline] (1,4.5) -- (1.4,4.5) -- (1.4,5);
\draw[bosssline] (1.4,1) -- (1.4,2.5) -- (2,2.5);
\draw[bossline] (1,1.5) -- (2,1.5);
\draw[bosline] (1,3.5) -- (1.6,3.5) -- (1.6,5);
\boss{1}{1.5}{0.1};
\bos{1}{3.5}{0.1};
\bosss{1}{4.5}{0.1};
\boss{2}{1.5}{0.1};
\bosss{2}{2.5}{0.1};
\bos{1.6}{5}{0.1};
\boss{2.5}{5}{0.1};
\bosss{3.5}{5}{0.1};
\end{tikzpicture}
\\ \\
\begin{tikzpicture}[scale=0.8,baseline=(bas.base)]
\node (bas) at (0,3.5) {};
\foreach\x in {3,...,1}{
\foreach\y in {4,...,1}{
\blank{\x}{\y};
}
\node at (4.5-\x,0.5) {\color{red}\tiny (\x)};
}
\foreach\y in {4,...,1}{
\node at (0,-\y+5.5) {$x_{\y}$};
}
\draw[bosssline] (1,4.5) -- (2,4.5);
\draw[bossline] (1,1.5) -- (1.5,1.5) -- (1.5,2.5) -- (2,2.5);
\draw[bosline] (1,3.5) -- (1.5,3.5) -- (1.5,5);
\boss{1}{1.5}{0.1};
\bos{1}{3.5}{0.1};
\bosss{1}{4.5}{0.1};
\boss{2}{2.5}{0.1};
\bosss{2}{4.5}{0.1};
\bos{1.5}{5}{0.1};
\boss{2.5}{5}{0.1};
\bosss{3.5}{5}{0.1};
\end{tikzpicture}
&
&
\begin{tikzpicture}[scale=0.8,baseline=(bas.base)]
\node (bas) at (0,3.5) {};
\foreach\x in {3,...,1}{
\foreach\y in {4,...,1}{
\blank{\x}{\y};
}
\node at (4.5-\x,0.5) {\color{red}\tiny (\x)};
}
\draw[bosssline] (1,4.5) -- (1.3,4.5) -- (1.3,5);
\draw[bosssline] (1.3,1) -- (1.3,2.5) -- (2,2.5);
\draw[bossline] (1,1.5) -- (1.5,1.5) -- (1.5,4.5) -- (2,4.5);
\draw[bosline] (1,3.5) -- (1.7,3.5) -- (1.7,5);
\boss{1}{1.5}{0.1};
\bos{1}{3.5}{0.1};
\bosss{1}{4.5}{0.1};
\boss{2}{4.5}{0.1};
\bosss{2}{2.5}{0.1};
\bos{1.7}{5}{0.1};
\boss{2.5}{5}{0.1};
\bosss{3.5}{5}{0.1};
\end{tikzpicture}
\end{tabular}
\end{center}
Replacing each tile in column 3 with the operator(s) it represents, this becomes
\begin{align*}
f_{(3,1,0,2)}
&=
(1-q t) (1-q^{2} t^{2}) x_1 x_2 x_4
\left(
{\rm Tr} [ k_3 k_2 k_3  s ] f_{(2,0,0,1)}
+
{\rm Tr} [ a_3^\dag k_3 k_2 a_3  k_3 s] f_{(0,0,2,1)}
\right. 
\\
&+
\left.
{\rm Tr} [ k_3 k_2 a_2 k_3 a_2^\dag s] f_{(2,0,1,0)} 
+
{\rm Tr} [ a_3^\dag a_2 k_3 k_2 a_3 k_3 a_2^\dag s] f_{(1,0,2,0)}
\right).
\end{align*}
The traces can then be factorized across commuting families of operators:
\begin{align*}
f_{(3,1,0,2)}
&=
(1-q t) (1-q^{2} t^{2}) x_1 x_2 x_4
\left(
{\rm Tr}[k_3^2 s_3]
{\rm Tr}[k_2 s_2]
f_{(2,0,0,1)}
\right.
\\
&+
q^2 t\  
{\rm Tr}[a_3 a_3^\dag k_3^2 s_3]
{\rm Tr}[k_2 s_2]
f_{(0,0,2,1)}
\\
&+
\left. 
{\rm Tr}[k_3^2 s_3]
{\rm Tr}[a_2 a_2^\dag k_2 s_2]
f_{(2,0,1,0)}
+
q^2 t^2\ 
{\rm Tr}[a_3 a_3^\dag k_3^2 s_3]
{\rm Tr}[a_2 a_2^\dag k_2 s_2]
f_{(1,0,2,0)}\right),
\end{align*}
and using the explicit form \eqref{eq:twist-factor} of the twist operators, this becomes
\begin{align*}
f_{(3,1,0,2)}
&=
(1-q t) (1-q^{2} t^{2}) x_1 x_2 x_4
\left(
{\rm Tr}[k^{2+2u} ]
{\rm Tr}[k^{1+u}]
f_{(2,0,0,1)}
\right.
\\
&+
q^2 t\  
{\rm Tr}[a a^\dag k^{2+2u}]
{\rm Tr}[k^{1+u}]
f_{(0,0,2,1)}
\\
&+
\left. 
{\rm Tr}[k^{2+2u}]
{\rm Tr}[a a^\dag k^{1+u}]
f_{(2,0,1,0)}
+
q^2 t^2\ 
{\rm Tr}[a a^\dag k^{2+2u}]
{\rm Tr}[a a^\dag k^{1+u}]
f_{(1,0,2,0)}\right).
\end{align*}
The traces themselves are easily calculated using the results in Section \ref{se:traces}. Thus we obtain the matrix elements $T_{\lambda,(2,0,0,1)}$, $T_{\lambda,(0,0,2,1)}$, $T_{\lambda,(2,0,1,0)}$ and $T_{\lambda,(1,0,2,0)}$.

\bigskip

\end{document}